\newtheorem{theorem}{Theorem}
\newtheorem{proof}{Proof}
\newtheorem{definition}{\bf Definition} 
\begin{document}

\title{Federated Learning based on Defending Against Data Poisoning Attacks in IoT}

\author{Jiayin Li, Wenzhong Guo,~\IEEEmembership{Member,~IEEE}, Xingshuo Han, Jianping Cai, Ximeng Liu*,~\IEEEmembership{Senior Member,~IEEE}
\thanks{J. Li is with the College of Computer and Cyber Security, Fujian Normal University, Fuzhou, Fujian, China, 350117. W. Guo, J. Cai, X. Liu are with College of Computer and Data Science, Fuzhou University, Fuzhou, China, 350108; and Key Lab of Information Security of Network Systems (Fuzhou University), Fujian Province. X. Han is with School of Computer Science and Engineering, Nanyang Technological University, Singapore (email: lijiayin2019@gmail.com, guowenzhong@fzu.edu.cn, xingshuo.han@ntu.edu.sg, jpingcai@163.com, snbnix@gmail.com)}
\thanks{Manuscript received xxx xx, xxxx; revised xxx xx, xxxx.}
}

\markboth{Journal of \LaTeX\ Class Files,~Vol.~XX, No.~XX, XX~XX}%
{Shell \MakeLowercase{\textit{et al.}}: A Sample Article Using Internet of Things for IEEE Journals}


\maketitle

\begin{abstract}
The rapidly expanding number of Internet of Things (IoT) devices is generating huge quantities of data, but the data privacy and security exposure in IoT devices, especially in the automatic driving system. Federated learning (FL) is a paradigm that addresses data privacy, security, access rights, and access to heterogeneous message issues by integrating a global model based on distributed nodes. However, data poisoning attacks on FL can undermine the benefits, destroying the global model's availability and disrupting the model training. To avoid the above issues, we build up a hierarchical defense data poisoning (HDDP) system framework to defend against data poisoning attacks in FL, which monitors each local model of individual nodes via abnormal detection to remove the malicious clients. Whether the poisoning defense server has a trusted test dataset, we design the \underline{l}ocal \underline{m}odel \underline{t}est \underline{v}oting (LMTV) and \underline{k}ullback-\underline{l}eibler divergence \underline{a}nomaly parameters \underline{d}etection (KLAD) algorithms to defend against label-flipping poisoning attacks. Specifically, the trusted test dataset is utilized to obtain the evaluation results for each classification to recognize the malicious clients in LMTV. More importantly, we adopt the kullback leibler divergence to measure the similarity between local models without the trusted test dataset in KLAD. Finally, through extensive evaluations and against the various label-flipping poisoning attacks, LMTV and KLAD algorithms could achieve the $100\%$ and $40\%$ to $85\%$ successful defense ratios under different detection situations. 
\end{abstract}

\begin{IEEEkeywords}
federated learning; data poisoning attacks; kullback-leibler divergence; anomaly model detection.
\end{IEEEkeywords}

\section{Introduction}
\IEEEPARstart{A}{s} represented autonomous vehicles Internet of Things (IoT) device number has increased dramatically, deep learning techniques have brought a paradigm in the variable application field, such as driven application \cite{ref1}, speech recognition \cite{ref2}, and image identification \cite{ref3}, which can use a large of history data to mine hidden patterns and make predictions about what will not happen in the future. As a result, people can optimize their travel time using deep learning techniques, significantly improving their work efficiency. Since traditional deep learning is a data-driven technique, the core of its performance is the size and quality of the training data, which is to train a usable network model through deep learning algorithms \cite{ref4, ref5} via collecting a mass of training dataset to store in a central server. Meanwhile, the privacy and security of data have been significantly compromised to bring a significant threat to people's lives and property \cite{ref6, ref7, ref8} and violate the general data protection regulation (GDPR) legal provisions \cite{ref9}. To address the above challenge, the federated learning (FL) approach is built up to avoid the problem of data privacy leakage and solve the disadvantages of training datasets islands \cite{ref10, ref11, ref12}. Unlike traditional deep learning framework, the FL is similar to a distributed learning structure that greatly solves the drawbacks of small training datasets caused by data decentralization. It can better use various data resources to train a more accurate neural network prediction model. In addition, the FL breaks the barrier that different data owners can not share their private data without worrying about data leakage. Despite the advantages mentioned above, the FL increases the risk of malicious attacks \cite{ref13, ref14, ref15}. A typical FL structure contains many local clients and a central server defined as the trusted parties. Due to the vast number of parties involved, each participant is not guaranteed to be an honest and credible entity, suffering from data poisoning attacks to reduce the prediction accuracy of the final model seriously \cite{ref16, ref17, ref18}.

The goal of data poisoning attacks is to attack the neural network model by breaking the distribution features of the training datasets, which are pervasive means for neural network vicious backdoor implantation to reach a specific purpose based on the attacker's intent  \cite{ref19, ref20, ref21}. When the training dataset suffers from data poisoning attacks, the trained network model based on the poisoned dataset can reduce the classification accuracy and lead to misclassified. Therefore, the above issues have laid an enormous security risk for applying neural networks, especially in the autonomous vehicles (AV) field, which leads to traffic accidents \cite{ref22}. Similarly, the data poisoning attacks are commonly used to destroy the final global model by attacking many local clients involved in FL \cite{ref23}. However, the existing defense approaches against data poisoning attacks are fundamental for building on the credible test dataset, and their defense success rate needs further improvement. Besides, it focuses on defending against data poisoning attacks without a trusted test dataset.

To guarantee the usability of the global predictive model and avoid data poisoning attacks in IoT, we propose a hierarchical defense data poisoning (HDDP) system against data poisoning attacks. Due to the local models' collaborative integration approach to obtaining the training model, the central server cannot receive the training dataset of each client to analyze the dataset. Thereby, the traditional approaches based on analyzing datasets were not suitable to reach the goal of finding poisoned local models in FL \cite{ref24, ref25, ref26}. To extend the usefulness of our proposed HDDP for FL, we build up the HDDP system to defend against data poisoning attacks under different detection conditions in this paper. The significant contributions of this paper are summarized as follows.
\begin{itemize}
\item In this paper, we propose a hierarchical defense HDDP against data poisoning attacks in FL, which can contain two types of defending against the data poisoning attacks to remove the poisoned local models and decrease the poisoned local models to influence the performance of an aggregated global model.  
\item To improve the performance of the defense mechanism against data poisoning attacks, we design a local model test voting (LMTV) algorithm based on the trusted test dataset to remove the poisoned local models. Specifically, we statistics the test classification results for each client's local model for each round of interaction in FL, which can be regarded as the basis for voting. 
\item To solve defense against data poisoning attacks in the absence of a trusted validation dataset, we also design a Kullback-Leibler (KL) divergence anomaly parameters detection (KLAD) algorithm, which computes the KL divergence to remove the irrelevant poisoned local models without the trusted test dataset. 
\item Finally, we implement the label-flipping attacks for data poisoning and evaluate the proposed defense algorithms via images classification based on Fashion-MINIST and GTSRB datasets to validate the efficiency and security of HDDP in FL, which reaches excellent performance.
\end{itemize}

The rest of the paper is organized as follows. In Section II, we give the related work. In Section III, we describe the preliminary knowledge. Furthermore, the system architecture is defined in Section IV. Next, we give more detail about the defense algorithms in Section V. Section VI provides the security and performance analysis. We conclude the paper and provide future work in Section VII.

\section{Related Work}
Data poisoning attacks are highly relevant for various application fields in IoT, which pose a severe security threat, such as computer vision \cite{ref27}, spam filtering \cite{ref28, ref29}, and disease diagnosis \cite{ref30}. In the beginning, several data poisoning attacks were studied to destroy some simple machine learning (ML) models containing support vector machine (SVM) \cite{ref31, ref32, ref33}, linear regression \cite{ref34}, and unsupervised learning \cite{ref35}, where data poisoning attacks researches could destroy the accuracy of the trained model by inserting some error data into original training data to disturb the distribution of training data. However, those data poisoning attacks only cause classification errors and do not achieve a specific target of poisoning attacks. To achieve purposeful data poisoning attacks, some target data poisoning attacks were proposed based on the will of malicious attackers \cite{ref36, ref37, ref38}. For instance, the authors in \cite{ref36} proposed a targeted backdoor attack to destroy the deep learning systems based on data poisoning, which realizes backdoor attacks by utilizing poisoning strategies (e.g., burying a specific back door) to achieve the adversarial goal of malicious attackers.
Moreover, the backdoor key is hard to notice even by humans to achieve stealthiness. However, the target backdoor attacks approach is less robust. To improve the robustness of data poisoning attacks, Wang $et$ $al.$ \cite{ref37} realized the data poisoning attacks against online learning to utilize a class of adversarial attacks on ML where an adversary has the power to alter a small fraction of the training data to make the trained classifier satisfy specific objectives. Nevertheless, it needs to have the ability to manipulate training data. Therefore, Zhao $et$ $al.$ \cite{ref38} proposed using a universal adversarial trigger as the backdoor trigger to attack video recognition models, which is an identification error directly based on the attacker's intent. However, the above studies for data poisoning attacks are often limited to centralized neural network training, which does not work well in FL. To improve the success rate of data poisoning attacks in FL, Fang $et$ $al.$ \cite{ref17, ref23} proposed a local model of data poisoning attacks to destroy the FL, which assumes an attacker has compromised some client devices. During the learning process, the attacker manipulates the local models' parameters on the compromised client devices. Therefore, they could realize the label-flipping for the training dataset to impact some local clients in FL.

For defense approaches against data poisoning attacks, the existing defense approaches aim to sanitize the training dataset. One category of defenses was to analyze the distribution of the training dataset and detect the malicious dataset based on their negative impact on the error of obtained model \cite{ref39, ref40, ref41}. Another category of defense approaches was to use the loss functions, optimizing the obtained model parameter to detect the injected malicious data simultaneously, which needs to jointly find a subset of the training dataset with a given size \cite{ref42, ref43}. Meanwhile, Peri $et$ $al.$ proposed a novel method by cleaning the error label to defend the data poisoning based on the deep-knn\cite{ref44}. Additionally, Chen $et$ $al.$ effectively defended the data poisoning attack in machine learning by adopting the GAN network \cite{ref45}. However, these defense approaches are not directly applicable to defense against the data poisoning attacks in FL because of the un-share of the local training datasets. For the defense algorithms in FL, there is also some literature to detect poisoning local models \cite{ref46, ref47}. Whatever these local model poisoning detection approaches that were implemented needed to be established based on the trusted test dataset. In addition, privacy breaches are risky during data poisoning detection. Designing the defense algorithm against data poisoning attacks is essential based on a credible verification dataset.

\section{Preliminary}
In this section, we introduce the basic knowledge to illustrate the detail of the data poisoning defense mechanism such as federated learning and Paillier cryptography.
\subsection{Federated Learning} 
Unlike traditional centralized deep learning framework, federated learning (FL) was proposed in the literature \cite{ref48} as a new training structure, which could ensure the privacy of private data and solve the data island. To call multiple resources, the FL has two substructures: many local clients and a central server. The function of local clients is to train local models based on their dataset without sharing the dataset. After each local client generates the local model, the central server can receive each local model to integrate the global model. Thereby, many local clients' interaction with the central server realizes the core of FL. To achieve the global model, the central server adopts the simple parameter average approach named FedAvg algorithm \cite{ref49}. More details about FL can be expressed as follows.

I. The central server initializes the training model $GM_0$, which is sent to each participant client $i\;(0<i\le n)$.

II. When each client receives the initialized model $GM_0$, the many local clients will update the $GM_0$ to generate a local model $LM_{(i, j)}$, where $j\;(0<j\le m)$ is the number of rounds of interactive communication between the local clients and the central server.

III. After each round, the central server could receive the local model $LM_{(i, j)}$ and further obtain the global model as the following formula.
\begin{equation}
	GM_j = \frac{\sum_{i = 1}^n{LM_{(i, j)}}}{n}
\end{equation}
Here, the end of model training sign is that the loss function reaches convergence.
\subsection{Paillier Cryptography}
Paillier cryptography is an effective technology to achieve additional homomorphic properties on the ciphertexts widely used in various privacy-preserving applications, which has three steps to realize data protection: key generation, message encryption, and ciphertexts decryption.
\begin{itemize}
\item Key Generation: For a security parameter $e$, select two large $e-$bit primes $p_1$, $q_1$, and compute the product $N$ of $p_1$, $q_1$ such as $N = p_1\cdot q_1$. Besides, the $p_1$ and $q_1$ should satisfy the least common multiple relationship $\lambda = lcm(p_1-1, q_1-1)$. Then, the function $L(x)$ is defined as $L(x) = \frac{x-1}{N}$, and calculate $\mu = (L(g^{\lambda}\mod N^2))^{-1}\mod N$, where $g\in \mathbb{Z}^{*}_n$. Finally, the public key and private key are $pk = (N, g)$ and $sk = (\lambda, \mu)$, respectively.
\item Message Encryption: Given a plaintext $m\in \mathbb{Z}_n$, after choosing a random value $r\in \mathbb{Z}^{*}_n$, the plaintext message $m$ can be encrypted as $c = Enc(m) ={ g^{m}\cdot r^{N}}\mod N^2$.
\item Ciphertext Decryption: Given a ciphertext $c$, the plaintext message is obtained by computing $m = Dec(c) = L(c^{\lambda})\cdot\mu\mod N$.
\end{itemize}
Note that Paillier cryptography has good additive homomorphism characteristics. Suppose that there are two plaintext $m_1\in \mathbb{Z}_n$ and $m_2 \in \mathbb{Z}_n$. According to the additive homomorphism characteristics, we can get the following formula.
\begin{equation}
Enc(m_1)\times Enc(m_2) = Enc(m_1+m_2)
\end{equation}
\subsection{Data Poisoning Attacks}
For the honest-but-curious (HBC) parties in FL, they passively observe the other data gathered from the clients to learn private messages about clients' datasets. Therefore, the FL is constantly being attacked by data poisoning. Unlike several known adversarial learning techniques reducing the accuracy of the predictive model, the data poisoning could trigger the classification error and accuracy drop to cause the unavailability of the training model based on a training dataset. For instance, the attackers can modify the test data to avoid the detection of malicious samples on a well-trained model \cite{ref50, ref51, ref52, ref53, ref54, ref55}. In this paper, we focus on target attacks and un-target attacks for data poisoning, which utilizes the training data to affect the global model accuracy and classification performance during the learning phase \cite{ref56}. Assume that the function $\mathcal{F}$ is the learning algorithm to train the predictive model. The $f$ is the data poisoning attacks algorithm if the malicious local clients obtain the poisoned dataset $\mathcal{D}^{*}$ from the benign dataset $\mathcal{D}$, they only execute the function $f$ is the following formula.
\begin{equation}
	f(\mathcal{D})\longrightarrow \mathcal{D}^{*}
\end{equation}

To realize the function $f$, the traditional approaches generate the error label to lead the error characteristics. Significantly, the poisoned training dataset $\mathcal{D}^{*}$ contains incorrect classification labels, which is very easy to distinguish the feature of the object $a$ into the feature of the object $b$. Thereby, the poisoned predictive model $GM^{*}$ can be obtained based on the $\mathcal{D}^{*}$ and learning algorithm $\mathcal{F}$, such as.
\begin{equation}
	\mathcal{D}^{*}\stackrel{\mathcal{F}}{\longrightarrow} GM^{*}
\end{equation}
Where the $GM^{*}$ will not achieve the correct classification of particular specific objects.

\section{System Architecture}
We aim to defend the impact of data poisoning attacks in FL and guarantee clients' privacy. Here, we build up the system model, define the attack model and further give the system's goal designed in this section.
\subsection{System Model}
In favor of the designed goal in this paper, we build up the system framework and solve the data poisoning defense mechanism to avoid the data poisoning attacks for object detection in IoT, which contains the vehicular clients training, data poisoning attacks, poisoning defense mechanism, and secure model aggregation as the Fig. 1.
\begin{figure}[!htb]
\centering
\includegraphics[width=80mm]{./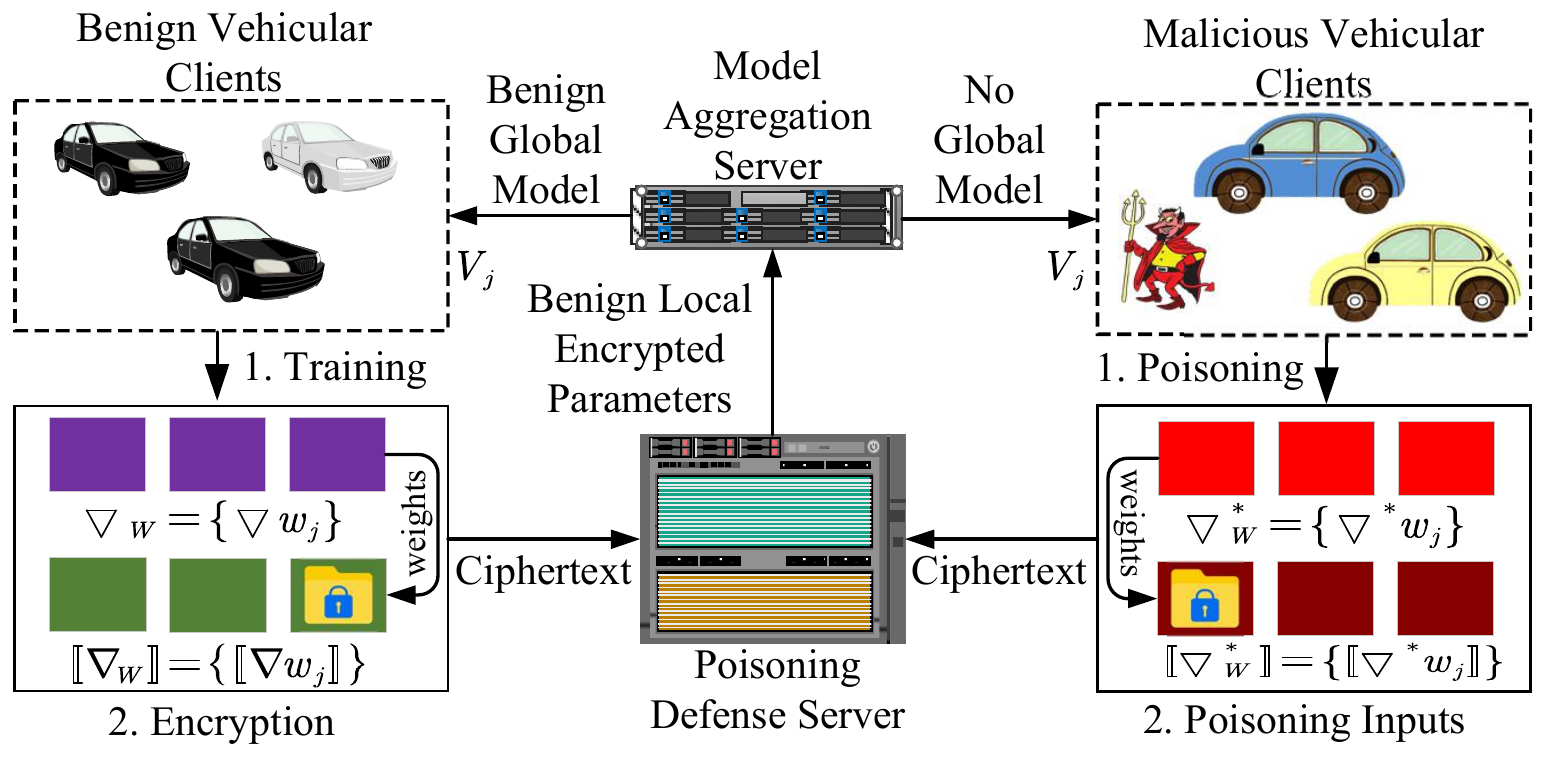}
\caption{System Model.}
\vspace{-0.1cm}
\end{figure}
\begin{itemize}
\item \textbf{Vehicular Client Trainers (VCT).} Vehicular client training is the first component to build up the local model by utilizing their dataset to train the model in the system model. And then, the vehicular clients (VCs) send the trained local model to the defense poisoning server. However, due to the vast number of participating VCs based on local models aggregation in FL, the system can not guarantee that each local vehicular client can honestly submit the benign model to the central server. On the contrary, malicious VCs could generate the poisoned local models to decrease the nature of the aggregated global model. Similarly, the VCs may be at risk of being attacked by malicious attackers without their knowledge providing the poisoned local models.
\item \textbf{Data Poisoning Attackers (DPA).} Data poisoning attacks are a common means to influence the performance-trained model in practical applications. This paper focuses on the label-flipping attacks in data poisoning to influence the FL. Furthermore, the label-flipping attacks are carried out to control and modify their vehicular dataset by malicious VCs or attackers. Malicious VCs and attackers aim to make the trained models achieve a specific classification for an error object according to their desired goal. Meanwhile, the poisoned models obtained are also sent and joined in global model integration.
\item \textbf{Poisoning Defense Server (PDS).} Poisoning defense server is an authenticated and a trusted party whose core function is designed in the system model to eliminate VCs' submitted poisoned local models. According to the algorithms we designed, the local models submitted must be detected to determine whether the local models are attacked before parameters collaborative integration. After detecting the data poisoning defense algorithms, the benign local models are saved to integrate with the usable global model.  
\item \textbf{Model Aggregation Server (MAS)} The MAS is also an HBC party, curious about the VCs' privacy but can perform the mechanism designed precisely as we want. When the poisoned local models are eliminated, the remaining benign ones are aggregated to generate a global model without decrypting the ciphertext of local models' parameters in MAS. Finally, the usable global model is sent to the benign VCs.
\end{itemize}
\subsection{Attack Model}
For data poisoning attacks in FL, we consider the constrained condition $\tau$ for the participated clients. The $\tau$ means the number of malicious clients, which usually need to satisfy $0<\tau<\frac{n}{2}$ to avoid detection easily for data poisoning attacks \cite{ref46}. Note that $n$ is the total amount of participants' clients. There are usually two different aspects of data poisoning to destroy the predicted ability of generated globe model in collaborative learning. One is called malicious clients that part some of the participating clients are malicious, which could control their dataset to reach the effect of destroying their model to impact the globe model. Another is called malicious attackers $\mathcal{A}$ that the participating clients may be attacked by the malicious attackers, which also modified the training dataset and triggered the data poisoning. To further state the attack process, we assume that all the adversaries know the learning algorithm used by each participating client. We can tamper with the training data accordingly (e.g., by adding fake data or mislabeling the training data). However, all the malicious users cannot collude together to poison their data and learn the globe model be-forehead as the malicious users know the training data generated from benign users.
\subsection{Problem Definition}
To illustrate the function of label-flipping attacks, we definite two indexes to measure the deep of label-flipping attacks, such as classification error and accuracy reduction.
\begin{definition}
\textbf{(Classification Error)} In data poisoning attacks, the attackers gave the source data $\mathcal{S}_d$, the global model $GM^{*}$ could output the target data $\mathcal{T}_{d}$ according to the attacker's wish, such as $\mathcal{S}_{d}\rightarrow \mathcal{T}_{d}$. Here, the $\mathcal{S}_d$ is an input value to the global model for which the attacker wants to influence the model's output. The $\mathcal{T}_{d}$ is a value of the attacker's choice that the influenced model should output for a given source input. The poisoning precision rate $Pr$ is defined as the following formula to measure the degree of data poisoning.
	\begin{equation}
		Pr = \frac{num(\zeta\!\rightarrow\!\varsigma)+num(\varsigma\!\rightarrow \!\zeta)}{num(\zeta)+num(\zeta\!\rightarrow \!\varsigma)+num(\varsigma)+num(\varsigma\!\rightarrow \!\zeta)}
	\end{equation}
where $\zeta$ and $\varsigma$ are the different classifications; $num(\zeta\rightarrow\varsigma)$ and $num(\varsigma\rightarrow\zeta)$ are the $\mathcal{S}_{d}\rightarrow \mathcal{T}_{d}$, and $num(\varsigma)$ are the number of right classification.
\end{definition}
\begin{definition}
\textbf{(Accuracy Reduction)} The accuracy reduction (AR) means that the data poisoning can also reduce the performance of the global model. Suppose that we define the $\rho$ to represent the classification accuracy. And then, the accuracy of classification for the benign global model $GM$ and the poisoned global model $GM^{*}$ are $\rho(GM)$ and $\rho(GM^{*})$. The following formula can express the specific quantification of AR.
	\begin{equation}
		AR_{(GM~ {GM^{*}})} = \rho(GM) - \rho(GM^{*})
	\end{equation}
\end{definition}
\subsection{Design Goal}
To defend against the impact of data poisoning (e.g., target attack and random poisoning attack), the system design aims to ensure classification correctness, defense success rate, and prediction model accuracy. 
\begin{itemize}
\item\textbf{Classification Correctness Rate:} To avoid the harm of data poisoning attacks, the system model can eliminate the malicious attackers' attacks on specific targets. More importantly, the system model should also be able to reject poisoned local clients from resubmitting models.
\item\textbf{Defense Success Rate:} The data poisoning defense success rate is another more important index to measure the efficiency of the poisoning defense algorithm, which is about the performance required to prevent the global model from the data poisoning attacks of local models from generating a practicable global model in FL, especially in the IoT.
\item\textbf{Prediction Model Accuracy:} For random data poisoning attacks, the higher the number of poisoned VCs, the lower the accuracy of the obtained global model prediction.  The system model must detect the malicious local models to reduce the impact of poisoned local clients' models submitted for the global model.
\end{itemize}

\section{Defense Mechanism in Federated Learning}
In this section, we mainly detail the hierarchical defense data poisoning model to eliminate the submitted local poisoned models. We give the following aspects: a system overview and the details about building the system.
\subsection{System Overview}
To facilitate the introduction of the system, we show a system overview. It contains four parts: training dataset generation, local models training, poisoning defense mechanism, and secure model aggregation, whose structure level is similar to an erect triangle. A brief overview of these stages is given in Fig. 2.
\begin{figure}[!htb]
\centering
\includegraphics[width=82mm]{./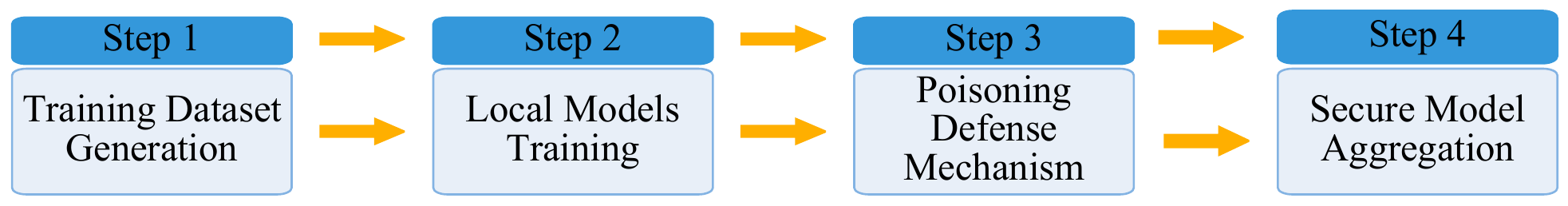}
\caption{System Overview.}
\end{figure}
\begin{itemize}
\item \textbf{Training Dataset Generation:} From Fig. 2, the training dataset generation is the basis for building up by utilizing the traditional FL structure. They are generated by the participating vehicles in the IVN, which also are stored in their own storage space to train the corresponding predictive models. Suppose that there are $n$ vehicles participating in collaborative learning training to generate the global model together. Each vehicle $V_{i\in[1,n]}$ can obtain the corresponding training images dataset $\mathcal{X}_{i\in[1, n]}$ collected by the in-vehicle cameras. Therefore, the $\mathcal{X}_i$ is regarded as the training dataset to execute the following steps.
\item \textbf{Local Models Training:} Local models training is the second step to realizing the whole system and is also the process of generating a large of local predictive models. After the training dataset generation, the initialized global model parameter is sent to each VC and utilizes their images as a training dataset to train the local model. Similarly, each vehicle $V_i$ could train its local model's parameters set $LM_i$. Thereby, $n$ vehicles can compose a model vector $M = [LM_1, LM_2, \cdots, LM_n]$. To guarantee the security of generated local models' parameters, we adopt Paillier cryptography to prevent the submitted local models from being reversed by malicious VCs and attackers.
\item \textbf{Poisoning Defense Mechanism:} The poisoning defense mechanism is the core component for the HDDP designed to distinguish whether the submitted local models have been poisoned in the PDS. If the submitted models suffer from data poisoning attacks, the HDDP will clear the poisoned model to eliminate the impact on the integrated global model. To execute the poisoning defense mechanism, the PDS needs to construct the queue $\mathcal{Q}$ for receiving a large of submitted local models' parameter sets. According to the definition in the above rounds of local model training, the $\mathcal{Q}$ can be presented as  $\mathcal{Q} = [LM_{(1, k)}, LM_{(2, k)}, \cdots, LM_{(n, k)}]$, where the $k$ represents the count of the PDS receiving local model, which is the number of rounds during local model training. When $\mathcal{Q}$ is verified by the poisoning defense algorithm, the system will eliminate the poisoned local models' parameters set from $\mathcal{Q}$.
\item \textbf{Secure Model Announcement:} When the poisoned local models are eliminated from the queue $\mathcal{Q}$, the MAS will integrate the submitted benign local models $\mathcal{Q}'$ to generate the global model $GM$ based on the classical FedAvg aggregation algorithm, such as $\mathcal{Q}\stackrel{FedAvg}{\longrightarrow} GM$. And then, each benign VC receives the benign $GM$ from MAS and further uploads the new local model for the next round. 
\end{itemize}
\subsection{Secure Local Models Submission}
According to the system model mentioned in section IV, the VCT train the local models $LM$ via their images $\mathcal{X}$, such as.
\begin{equation}
\mathcal{X}\stackrel{\mathcal{F}}{\longrightarrow} LM
\end{equation}

Owing to exist of the malicious VCs or malicious attacker $\mathcal{A}$, the trained local models may be poisoned. Assume that there are $\tau\;(0<\tau<\frac{n}{2})$ poisoned local models, which can be indicated as $LM^{*}$ to form poisoned models queue $Q^{*}$. Therefore, the whole local queue is also expressed, such as $Q=[Q',Q^{*}]$. To ensure the safety of each local model transmission, we adopt Paillier cryptography to achieve the ciphertext for each local model's parameters. Specifically, the PDS randomly chooses two large $e-$bit primes $p_1,q_1$, which satisfies the least common multiple relationship $\lambda=lcm(p_1-1,q_1-1)$. And then, PDS calculates the product value of $N=p_1\cdot q_1$ and obtains the $\mu=(L(g^{\lambda} mod N^{2}))^{-1} mod\;N$ based on the function $L(x)=\frac{x-1}{N}$, where $g\in \mathbb{Z}^{*}_n$. Finally, the secret key ($\lambda,\mu$) and public key ($N,g$) are generated in PDS. Note that the PDS could send the ($\lambda,\mu$) and  ($N,g$) to each VC. To avoid the leakage of a private message from local models, it performs the following encryption steps before each vehicle $V_i$ sends their local model's parameters set $LM_i$ to the PDS:

I. Firstly, each $V_{i}$ utilizes the public key $(N, g)$ and randomly chooses a value $r (r\in \mathbb{Z}^{*}_{n})$ to compute the ciphertext $c_{LM_{i}}$ of local model $LM_{i}$, such as $c_{LM_{i}} = Enc(LM_{i}) = g^{LM_{i}}\cdot {r^{N}}\mod{N^{2}}$, where the element $m$ in local model $LM_i$ belongs to the domain is $\mathbb{Z}_n$.

II. When the $V_{i}$ obtains the ciphertext $c_{LM_{i}}$ for plaintext local model $LM_{i}$, the $c_{LM_{i}}$ is sent to the PDS.

III. Finally, when the MAS receives the ciphertext of benign local models, the MAS could obtain the ciphertext of global model $c_{GM}$ based on the additive homomorphic characteristics of Paillier. And then, the MAS sends the $c_{GM}$ to each $V_{i}$ which could use the secret key $(\lambda, g)$ to get the plaintext of $GM$, such as $GM = Dec(c_{GM})=L(c^{\lambda}_{LM_{i}})\cdot \mu\mod N$.
\subsection{Hierarchical Defense Data Poisoning Attacks}
In this section, we mainly state more detail about the HDDP system and further describe the execution of the HDDP. To better accomplish the data poisoning defense mechanism, we design two types of data poisoning defense mechanisms to defend against two different situations. One is that we design a local model test voting (LMTV) mechanism with the trusted test dataset $\mathcal{D}_{test}$ in PDS; another is that we also design a Kullback-Leibler divergence anomaly parameters detection (KLAD) mechanism without the $\mathcal{D}_{test}$.
\subsubsection{LMTV Defense Mechanism}
When a each $V_i$ submits its ciphertext $c_{LM_{i}}$, $n$ number $c_{LM_{i\in[1,n]}}$ could construct the ciphertext vector $c_{\mathcal{Q}}$, such like $c_{\mathcal{Q}} = [c_{LM_{1}}, c_{LM_{2}}, \cdots, c_{LM_{n-1}}, c_{LM_{n}}]$. Suppose that the PDS have a trusted test dataset $\mathcal{D}_{test}$, which is utilized to detect whether the submitted local models have suffered from a data poisoning attacks and further remove the poisoned VCs as follows.

I. Because the PDS is an authenticated and trusted party, PDS utilizes the secret key $sk = (\lambda, \mu)$ to decrypt the ciphertext vector $c_{\mathcal{Q}}$, such like $c_{\mathcal{Q}} = [c_{LM_{1}}, c_{LM_{2}}, \cdots, c_{LM_{n-1}}, c_{LM_{n}}]$ to obtain the plaintext models vector $\mathcal{Q} = [LM_{1}, LM_{2}, \cdots, LM_{n-1}, LM_{n}]$.

II. And then, the PDS evaluates each local model submitted to give the classification of the natural objects based on the trusted $\mathcal{D}_{test}$. And then, the classification accuracy rate $(CAR)$ for each object classification is computed as the following formula.
\begin{equation}
CAR = \frac{num(obj)-num(obj^{*})}{num(obj)}\cdot 100\%
\end{equation}
Where the $num(obj^*)$ and $num(obj)$ are the number of the error classification by the each local model, respectively. The value of $num(obj^*)$ and $num(obj)$ can be obtained by confusion matrices.

III. According to those mentioned above, the number of poisoned local models $\tau$ satisfies the relationship of $0< \tau< \frac{n}{2}$. Therefore, the PDS implements a defense mechanism against poisoned local models based on CAR's value. Specifically, when each local model $LM_{i}$ is detected, the PDS could calculate the value of $CAR$ for each classification based on $\mathcal{D}_{test}$. As for $CAR$ of each classification of local model $LM_{i}$ in $n$ local models, if the $50\% \le CAR$, the PDS regards the local model $LM_{i}$ detected at this time as a normal model and is retained. Conversely, if $0< CAR < 50\%$, the PDS regards the local model $LM_{i}$ detected as a poisoned model and is further eliminated.

IV. After detecting each local model, the PDS uses the public key $pk = (N, g)$ to encrypt the benign local models and obtain the ciphertext. Finally, the ciphertext $c'_{LM'_{i}}$ of many benign local models $LM'_{i}$ will be composing the benign vector $\mathcal{Q}'$, which is sent to MAS.

According to the above execution, we can achieve the data poisoning defense mechanism based on the trusted test dataset $\mathcal{D}_{test}$. The above process can be given as algorithm 1.
\begin{algorithm}[!htb]
	\caption{Local Model Test Voting (LMTV).}
	\KwIn{the task list $c_{\mathcal{Q}} = [c_{LM_{1}}, c_{LM_{2}}, \cdots, c_{LM_{n}}]$; the secret key $(\lambda, \mu)$; the public key $(N, g)$\;}
	\KwOut{benign local models vector $c_{\mathcal{Q}'} = [c_{LM'_{i}}]$\; }
	\For {$1\le epoch \le Epoches$}{
	\For {$1\le i \le n$}{
	PDS utilizes the $(\lambda, g)$ to decrypte the $c_{\mathcal{Q}}$, such as $LM_{i} = Dec(c_{LM_{i}}) = L(c^{\lambda}_{LM_{i}})\cdot\mu{\mod N}$, and constructs $\mathcal{Q} = [LM_{1}, LM_{2}, \cdots, LM_{n}]$\;
	}	
	\For {$1\le j \le n$}{
	PDS evaluates the $LM_{j}$ to get $CAR$ based on $\mathcal{D}_{test}$\;
	\If {$0 < CAR < 50\%$}{
	PDS deletes $LM_{j}$ from $\mathcal{Q}$;
	}
	\Else{
	PDS saves local model $LM'_{j}$ and obtains $c'_{LM_{j}}=Enc(LM'_{j})=g^{LM'_{j}}\cdot r^{N}\mod N^{2}$\;
	}
	}
	}
	PDS gets benign local models vector $\mathcal{Q'}$\;
	\textbf{Return} $\mathcal{Q'}$
\end{algorithm}
\subsubsection{KLAD Defense Mechanism} For the LMTV mechanism, the trusted test dataset $\mathcal{D}_{test}$ in PDS is a prerequisite for data poisoning defense. If there is no test dataset in PDS, we can not use the LMTV algorithm to defend against data poisoning attacks. To address the above issue, we give a new data poisoning mechanism named KLAD. The critical point of KLAD is to measure the correlation relationship between any two local models. Here, we provide the specific formula representation of the calculation formula for KL divergence, such as.
	\begin{equation}
		\kappa_{KL}(X||Y)=\mathcal{H}(X)-\mathcal{H}(Y)=\sum_{x\in X,y\in Y}p(x)\cdot log\frac{p(x)}{p(y)}
	\end{equation}
	where the $Y$ are the random variable sets, $\mathcal{H}(X,Y)$ is the joint probability distribution function of $X$ and $Y$. The $\mathcal{H}(X,Y)$ can be computed by the information entropy $\mathcal{H}(X)$ by the formula $\mathcal{H}(X)=-\sum_{x\in X}p(x)log(p(x))$ and the $\mathcal{H}(Y)$ can be get by $\mathcal{H}(Y)=-\sum_{y\in Y}p(y)log(p(Y))$.
	
To use the kullback-leibler divergence $\kappa_{KL}(\cdot)$ to measure the correlation relationship between two models, we calculate $\kappa$ for the model parameters of each layer of the network. More details about the KLAD can be given as follows.

I. Firstly, when the PDS receives the ciphertext vector $c_{\mathcal{Q}} = [c_{LM_{1}}, c_{LM_{2}}, \cdots, c_{LM_{n-1}}, c_{LM_{n}}]$,  the PDS will also decrypt the $c_{\mathcal{Q}}$ to obtain the plaintext $\mathcal{Q} = [LM_{1}, LM_{2}, \cdots, LM_{n-1}, LM_{n}]$ based on the secret key $(\lambda, \mu)$, such as $LM_{i} = Dec(LM_{i}) = L(c^{\lambda}_{LM_{i}})*\mu\mod N$.

II. And then, the plaintext for each local model's parameters can be divide into $L$ parts based on the number of layer of network model in the PDS. For instance, as for local models $LM_{i\in[1, n]}$, whose the network structure $\mathcal{W}$ has four layer $\mathcal{L}^{{LM}_{i}}_{\omega\in[1, 4]} = [\omega^{(LM_{i}, \mathcal{L}_{1})}_{1}, \omega^{(LM_{i}, \mathcal{L}_{1})}_{2}, \omega^{(LM_{i}, \mathcal{L}_{1})}_{3}, \omega^{(LM_{i}, \mathcal{L}_{1})}_{4}]$.

III. Next, according to the calculation form of $\kappa_{KL}(\cdot)$, we randomly select $\mathcal{L}^{{LM}_{i}}_{\omega\in[1, L]}$ to compute the the $\kappa_{KL}(\cdot)$ of each layer in $\mathcal{L}^{{LM}_{i}}_{\omega\in[1, L]}$ between with any other layer in $\mathcal{L}^{{LM}_{j}}_{\omega\in[1, L]}$. Specifically, we use the corresponding layer to respectively replace $X$ and $Y$ in equation $10$ and obtain the value of $\kappa_{KL}{(LM_{i}||LM_{j})}$. Note that the $i, j\in [1, n]$ and $i \neq j$. And then, the PDS computes the function $F$ to obtain a value reflecting the correlation of the entire model of $\mathcal{L}^{{LM}_{i}}_{\omega\in[1, L]}$ and $\mathcal{L}^{{LM}_{j}}_{\omega\in[1, L]}$. The function $F$ can be indicated as follows.
\begin{equation}
F(\mathcal{L}^{{LM}_{i}},\mathcal{L}^{{LM}_{j}}) = \frac{\sum^{L}_{i=1}{\kappa_{(KL,\;i)}}}{|L|}
\end{equation}

IV. According to step 3 above, the PDS could get $n$ number of $F$. To remove the poisoned local models' parameters, the $F's$ average value $\tilde{F}$ is regarded as an important judge indicator which is computed as $\tilde{F}=\frac{\sum^{n}_{i=1}{F}}{n}$. And then, the PDS obtain the different ratio $\rho$ by calculating the size of each $F_{i\in[1,n]}$ deviation from the $\tilde{F}$ as follows.
\begin{equation}
\rho = \frac{|F_i-\tilde{F}|}{|\tilde{F}|}
\end{equation}

V. Finally, according to the value of $\rho$ and threshold $\Theta$, the PDS achieve the defense mechanism for data poisoning. Specially, if the $\rho\le \Theta$, the PDS saves the benign local model $LM'_i$ and uses the public key $(N,g)$ to computes the cipher-text of $LM'_i$, such as $c_{LM'_i}=Enc(c_{LM'_i})=g^{LM'_i}\cdot r^{N} mod\;N^{2}$; otherwise, the $\Theta\le \rho$, the PDS regards the local clients as the poisoned local clients and further removes them.

Based on the above steps, we could realize the KLAD without the trusted dataset $\mathcal{D}_{test}$ to achieve the defense data poisoning attacks. Thereby, the above steps can be shown as algorithm 2, and the defense algorithm chose like the flow chart as Fig. 3.
\begin{figure}
	\centering
	\includegraphics{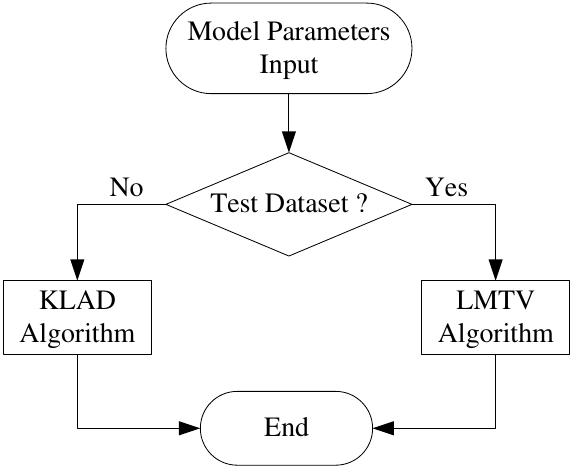}
	\caption{Flowchart of Defense Algorithm}
	\label{fig:my_label}
\end{figure}
\begin{algorithm}[!htb]
	\caption{Kullback-Leibler Divergence Anomaly Parameters Detection (KLAD).}
	\KwIn{the task list $c_{\mathcal{Q}} = [c_{LM_{1}}, c_{LM_{2}}, \cdots, c_{LM_{n}}]$; the secret key $(\lambda, g)$; the public key $(N, g)$\;}
	\KwOut{benign local models vector $c_{\mathcal{Q}'} = [c_{LM'_{i}}]$\;
	}
	\For {$1\le epoch \le Epoches$}{
	\For {$1\le i \le n$}{
	PDS gets $LM_{i} = Dec(c_{LM_i}) = L(c^{\lambda}_{LM_{i}})*\mu\mod N$, and further constructs $\mathcal{Q} = [LM_{1}, LM_{2}, \cdots, LM_{n-1}, LM_{n}]$\;
	}
	PDS randomly selects the local $LM_i$ from $\mathcal{Q}$;\\	
	\For{$1\le j \le n$}{
	\For {$1\le t \le L $}{
	PDS computes the KL divergence, $\kappa_{KL}=\mathcal{H}(\mathcal{L}^{LM_{i}}_{t})-\mathcal{H}(\mathcal{L}^{LM_{i}}_{t},\mathcal{L}^{LM_{j}}_{t})$\;
	}
	PDS gets $F(\mathcal{L}^{LM_{i}}, \mathcal{L}^{LM_{j}}) = \frac{\sum^{L}_{t = 1}\kappa_{(KL,t)}}{|L|}$\;
	}
	PDS calculates the average value $\tilde{F}=\frac{\sum^{n}_{i=1}F}{n}$, computes $\rho = \frac{|F_i-\tilde{F}|}{|\tilde{F}|}$, determine the threshold $\Theta$\;
	\If {$\Theta<\rho$}{
	PDS removes $LM_{i}$ from $\mathcal{Q}$;
	}
	\Else{
	PDS saves $LM'_{i}$, computes the ciphertext $c_{LM'_{i}} = Enc(LM'_{i})=g^{LM'_{i}}\cdot r^{N}\mod N^{2}$\;
	}
	}
	PDS gets benign local models vector $\mathcal{Q'}$\;
	\textbf{Return} $\mathcal{Q'}$
\end{algorithm}
\subsection{Secure Global Aggregation}
After execution based on LMTV and KLAD, the PDS removes the poisoned local models and saves the benign local models. Suppose that the MAS receives the ciphertexts for $m$ benign local models, such as $\mathcal{Q}' = [c_{LM'_{1}}, c_{LM'_{2}}, \cdots, c_{LM'_{m}}]$, where the $m$ is the number of benign local models submitted. The MAS can achieve the aggregation for a secure global model in the following steps.

I. As for each $c_{LM'_{i}}$, the MAS computes the value of $m$ root, such as $(Enc (LM'_{i}))^{\frac{1}{m}}={(c_{LM'_{i}}})^{\frac{1}{m}}$.

II. When the MAS gets ${(c_{LM'_{i}}})^{\frac{1}{m}}$. and further computes the product of $m$ models ciphertext to obtain the ciphertext of global model $c_{GM} = \prod_{i=1}^{m}{(Enc (LM'_{i}))^{\frac{1}{m}}}$.

III. Finally, each stored VC receives the $c_{GM}$ from the MAS, uses the secret key $(\lambda, \mu)$ to obtain the plaintext of global model $GM$, and executes a straining learning algorithm to update its local model's parameters.

In order to illustrate the ciphertext of global model $c_{GM} = \prod_{i=1}^{m}{(Enc (LM'_{i}))^{\frac{1}{m}}}$, and ensure that the plaintext of $c_{GM}$ satisfy the result calculated by FedAvg algorithm, we give the proof process as follows.
\begin{large}
\[\begin{array}{l}
c_{GM} = \prod_{i=1}^{m}{(c_{LM'_{i}}})^{\frac{1}{m}}\\
\;\;\;\;\;\;\;=(c_{LM'_{1}})^{\frac{1}{m}}\times (c_{LM'_{2}})^{\frac{1}{m}}\times \cdots \times (c_{LM'_{m}})^{\frac{1}{m}}\\
\;\;\;\;\;\;\;=Enc (\frac{LM'_{1}}{m})\times Enc (\frac{LM'_{2}}{m})\times \cdots \times Enc (\frac{LM'_{m}}{m})\\
\;\;\;\;\;\;\;=Enc(\frac{LM'_{1} + LM'_{2} + \cdots + LM'_{m}}{m})\\
\;\;\;\;\;\;\;=c_{GM}
\end{array}\]
\end{large}

\section{Security \& Performance Analysis}
To illustrate the security of the proposed defense algorithms based on the defined attack model, we give the security and performance analysis of the HDDP in this section.
\subsection{Security Analysis}
According to the definition of the attack model, two different directions of data poisoning attacks exist to influence the performance of the global model. A type of data poisoning attack comes from the malicious VCs, and another type of data poisoning attack is from the malicious attacker $\mathcal{A}$, which has the same hazard of misclassifying data to lead to the classification error and accuracy reduction of the global model. For instance, the poisoned local model could classify the source target $0$ into poisoning target $2$. To illustrate the global model's effectiveness and the defense against the attacks of malicious VCs and attackers $\mathcal{A}$, the proofs are given to state the correctness of the HDDP system.
\begin{theorem}
\emph{The HDDP system effectively guarantees the security of local models, ensuring that the local models' parameters are not leaked during the transmission process. Even if the local models are stolen during transmission to PDS, the attackers $\mathcal{A}$ cannot obtain any private message.}
\end{theorem}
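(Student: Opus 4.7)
The plan is to reduce the confidentiality claim to the semantic security of Paillier encryption under the Decisional Composite Residuosity (DCR) assumption, exploiting the fact that every local model leaves a VC only as a Paillier ciphertext generated with fresh randomness.

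First, I would fix the adversary's view precisely. By Section V-B, each vehicle $V_i$ computes $c_{LM_i}=g^{LM_i}\cdot r^{N}\bmod N^2$ using the public key $(N,g)$ and a freshly sampled $r\in\mathbb{Z}_N^*$, and transmits only $c_{LM_i}$; the secret key $(\lambda,\mu)$ resides at the trusted PDS and is never placed on the wire in the clear. Hence, a passive network adversary $\mathcal{A}$ eavesdropping on the VC-to-PDS channel sees at most $(N,g)$ together with the ciphertext vector $c_{\mathcal{Q}}=[c_{LM_1},\dots,c_{LM_n}]$ across all rounds.

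Next, I would invoke the IND-CPA security of Paillier: under DCR, for any two message vectors the resulting ciphertext distributions are computationally indistinguishable. A hybrid over the $n$ clients and, within each $LM_i$, over the coordinates of the parameter vector then shows that $\mathcal{A}$'s transcript is indistinguishable from a vector of encryptions of zeros drawn with independent $r$'s. Consequently, any efficient algorithm extracting a non-negligible amount of information about some $LM_i$ from $c_{\mathcal{Q}}$ can be converted into a DCR distinguisher, contradicting the assumption; this is the quantitative heart of the security claim and I would state it as a reduction bounding $\mathcal{A}$'s advantage by $n\cdot|LM_i|$ times the DCR advantage.

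The main obstacle lies in two scheme-specific subtleties that must be discharged before the reduction goes through. First, model parameters are real-valued, so before encryption they must be encoded as integers in $\mathbb{Z}_N$ via fixed-point scaling; I would argue that for the $e$-bit primes $p_1,q_1$ chosen in key generation the modulus $N$ is large enough to accommodate every scaled weight, making the encoding injective and free of modular wrap-around that could leak information. Second, according to the scheme the secret key $(\lambda,\mu)$ is redistributed by PDS to each VC so that clients can later decrypt $c_{GM}$; the theorem therefore implicitly requires that this key distribution and the storage at each VC lie outside $\mathcal{A}$'s reach, and that $\mathcal{A}$ is not itself a corrupted VC. I would make this hypothesis explicit, and under it the reduction closes and the security of each $LM_i$ against the on-wire adversary $\mathcal{A}$ follows.
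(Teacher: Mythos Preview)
Your proposal is correct and follows the same underlying route as the paper: both argue that because each $LM_i$ is transmitted only as a Paillier ciphertext $c_{LM_i}=g^{LM_i}\cdot r^{N}\bmod N^2$, the security of Paillier encryption prevents an eavesdropping adversary from learning the model parameters. The paper's own proof is essentially a one-line appeal to ``the security of Paillier cryptography'' together with the observation that PDS processes plaintexts only offline; it does not name DCR, does not formulate IND-CPA, and gives no hybrid or reduction.

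Your treatment is therefore more rigorous than the paper's along two axes. First, you make the reduction explicit and quantitative, which the paper does not attempt. Second, you surface two genuine subtleties that the paper's proof silently ignores: the fixed-point encoding of real-valued weights into $\mathbb{Z}_N$, and the fact that in Section~V-B the secret key $(\lambda,\mu)$ is distributed to every VC, so the theorem can only hold under the additional assumption that $\mathcal{A}$ neither corrupts a VC nor intercepts that key distribution. Flagging this last point is important, since without it the claimed confidentiality against an on-wire adversary who is also a malicious client would be false; the paper's proof simply does not confront this tension.
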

\begin{proof}
As the content mentioned above, the system can generate $\mathcal{M}$ local models based on $\mathcal{M}$ VCs. To protect the transmission of each local model's parameters, the third trusted party PDS generates the key $(\lambda, \mu)$ and $(N, g)$ according to the Pallier cryptography theory are sent to each VC. Specifically, when the local models are generated, the VCs utilize the $pk = (N, g)$ to compute the ciphertext $c_{LM_i}$ of each local model, such as $c_{LM_{i\in[1,\mathcal{M}]}} = Enc(LM_i) = g^{LM_i}\cdot r^{N}\mod N^{2}$. From the security of Paillier cryptography \cite{ref57}, the transmission process could guarantee the safety of local models' parameters. Even if the malicious attacker $\mathcal{A}$ obtains the ciphertext $c_{LM_i}$, he can not infer the whole plaintext of local models. More importantly, when the ciphertext $c_{LM_i}$ is sent to the PDS, all data processing is performed offline in PDS. Therefore, the HDDP system can ensure VCs' private information is secure.
\end{proof}
\begin{theorem}
\emph{The HDDP effectively defense the data poisoning attacks of malicious VCs to remove the poisoned local models. Based on the HDDP defense mechanisms, the final global model will not be affected in FL, which effectively defense the attack of malicious VCs.}
\end{theorem}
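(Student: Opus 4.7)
The plan is to decompose the theorem into three successive claims that match the three-stage pipeline of HDDP and prove each separately: (i) LMTV correctly identifies and removes every poisoned $LM_i \in Q^{*}$ when a trusted $\mathcal{D}_{test}$ is available; (ii) KLAD achieves the same separation without $\mathcal{D}_{test}$; and (iii) once $Q^{*}$ is removed, the secure FedAvg in Section V.D produces a global model statistically indistinguishable from one trained without any malicious VCs.

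For claim (i) I would argue directly from Definition~1. Under a label-flipping attack $\zeta \rightarrow \varsigma$, a poisoned $LM^{*}$ concentrates its mass on the flipped mapping, so evaluating it on $\mathcal{D}_{test}$ yields a $CAR$ on class $\zeta$ that falls below $50\%$ (in fact close to $0\%$ in the idealised limit), while every benign $LM_i$ preserves the training distribution and yields $CAR \ge 50\%$ on the same class by construction. Plugging these bounds into Step~III of Algorithm~1 shows that the predicate $0 < CAR < 50\%$ fires exactly on $Q^{*}$. The assumption $0 < \tau < n/2$ is not used for correctness here but guarantees that the classification baseline on $\mathcal{D}_{test}$ is well-posed.

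For claim (ii) I would exploit the parameter-space signature of poisoning. Benign local models are trained on samples drawn from a common latent distribution, so after a round of FedAvg-driven convergence their layer-wise parameter histograms are close in KL; poisoned models, having been fit to $f(\mathcal{D}) = \mathcal{D}^{*}$, produce weights whose layerwise distributions diverge. Therefore, for the anchor $LM_i$ selected in Step~III, the averaged divergence $F(\mathcal{L}^{LM_i},\mathcal{L}^{LM_j})$ defined in Eq.~(10) concentrates around a small value when $LM_j$ is benign and inflates when $LM_j$ is poisoned. Because $\tau < n/2$, the mean $\tilde{F}$ is dominated by benign contributions, so the normalised deviation $\rho$ of Eq.~(11) is small for benign indices and large for poisoned ones, and the threshold $\Theta$ of Step~V separates them. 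Finally, for claim (iii), once $Q^{*}$ is purged, Section~V.D applies FedAvg only to $Q'$; by the homomorphic identity already verified in the paper, $Dec(c_{GM}) = \frac{1}{m}\sum_{i=1}^{m} LM'_{i}$, which is the aggregate of benign local models only and is therefore unaffected by the malicious VCs.

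The main obstacle will be the KLAD step. LMTV's correctness is essentially a labelled-test-set argument, but KLAD relies on the separability of benign-benign versus benign-poisoned KL divergences, which is not automatic under non-IID client data: heterogeneous benign clients can produce inter-benign KL gaps comparable to a mild poisoning signal, and the proof must rule out this regime. I expect the rigorous part to consist of (a) specifying the separation margin between benign-benign and benign-poisoned $F$ values, (b) justifying the threshold $\Theta$ so that $\rho$ crosses it on exactly $Q^{*}$ with high probability, and (c) showing that the detection outcome is invariant to which anchor $LM_i$ is randomly chosen in Step~III, so that a single adversarially-placed anchor cannot subvert the vote.
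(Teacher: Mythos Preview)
Your proposal follows essentially the same route as the paper: its proof is also a two-part walk-through, first arguing that LMTV's per-model $CAR$ test on $\mathcal{D}_{test}$ flags exactly the poisoned submissions (the paper invokes $0<\tau<n/2$ as a majority safeguard, whereas you treat it as inessential to the per-model test---a minor difference in emphasis), and then arguing that KLAD's layerwise KL divergences separate poisoned from benign models via the $\rho$-versus-$\Theta$ comparison, with both parts prefaced by the offline-decryption remark. Your explicit three-claim decomposition and the aggregation step~(iii) are tidier than the paper's presentation but add nothing the paper does not already have elsewhere.

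Where you go further is in naming the KLAD obstacles---non-IID benign heterogeneity, threshold calibration, and anchor-invariance. The paper's proof simply asserts that ``the deeper the model's poisoning, the weaker the correlation'' and stops there; it never establishes the separation margin you ask for in~(a)--(c), and its own Table~III (benign clients removed, $40$--$85\%$ defense rate) shows the statement is aspirational rather than proved. So your plan is correct and aligned with the paper, but be aware that the paper does not resolve the very issues you flag: if you intend a rigorous argument for KLAD you will have to supply assumptions (e.g., IID clients or a quantitative poisoning-strength lower bound) that the paper leaves implicit.
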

\begin{proof}
In this paper, we design two types of defense algorithms called LMTV and KLAD to defend against FL data poisoning attacks.
\subsubsection{Proof of LMTV}
As for the LMTV algorithm, the system mainly uses the trusted test dataset $\mathcal{D}_{test}$ to verify each local model for the threat of data poisoning attacks. Specifically, when the PDS receives the ciphertext of the $n$ local models, it goes offline to calculate the plaintext of each local model via $LM_{i\in[1,\mathcal{M}]} = Dec(c_{LM_i}) = L(c^{\lambda}_{LM_i})\cdot\mu\mod N$ to construct the model vector $\mathcal{Q}=[LM_1, LM_2, \cdots, LM_n]$. Because the PDS is the trusted third party and the above operations are performed offline, the PDS can not reveal the leakage of any private messages about each local model's parameters and protect the malicious attacker $\mathcal{A}$ from obtaining the plaintext information through the internet. To realize the defense algorithm for data poisoning attacks, the LMTV needs to achieve the classification confusion matrix to estimate every classification for every categorical label respectively based on the $\mathcal{D}_{test}$ in each epoch. For $n$ local models in every round, the system computes $CAR$ to obtain the each local model voting classification with different rates by calculating $CAR = {num(obj)-num(obj^*)}/{num(obj)}\cdot 100\%$. When $CAR>50\%$ means inconsistent classification voting with $n$ local models, the system will consider the few inconsistent ones as poisoned local models to execute the elimination operation. Since the number of prescribed poisoned local models, $\tau$ is between $0 < \tau < \frac{n}{2}$, the system can use the LMTV algorithm both to avoid the effects of local dataset differences and to eliminate poisoned local models that are inconsistent with the majority of model classifications.
\subsubsection{Proof of KLAD}
Different from the LMTV algorithm, KLAD can realize the poisoning defense without the test dataset $\mathcal{D}_{test}$. Here, the system mainly utilizes the KL divergence to calculate the correlation relationship between different local models. According to the attribution, the correlation between the poisoned local models and the friendly model shows a different relationship; the deeper the model's poisoning, the weaker the correlation between it and the benign local models. Specifically, when the PDS receives the ciphertext of local models $c_{Q}$, the PDS decrypts the ciphertext vector $c_{Q}$ to obtain the plaintext vector of local models $\mathcal{Q}$ via the private key $sk = (\lambda, \mu)$. Owing to the above operations being performed offline, the PDS is protected from the malicious attacker $\mathcal{A}$ to steal the private message sourcing from each local model's parameters. To accomplish the defense algorithm, the PDS randomly selects the local model $LM_i$ to divide the plaintext of the local model into $L$ parts based on the number of layers in-network $\mathcal{W}$. And then, the $\kappa_{KL}(\cdot)$ of each layer is also computed to further obtain the metric function $\kappa_{KL}(\cdot)$ between local model $LM_i$ and $LM_j$ in each epoch, such as $\kappa_{KL}(LM_i||LM_j) = \mathcal{H}(LM_i)-\mathcal{H}(LM_i,LM_j)$. In addition, the function $F$ is also computed via the formula $F = \frac{\sum_{i=1}^{L}\kappa_{(KL,i)}}{|L|}$, and the judge indicator $\tilde{F}$ is also obtained by computing $\tilde{F}=\frac{\sum{F_i}}{n}$. According to different ratios $\rho=\frac{|F_i-\tilde{F}|}{|\tilde{F}|}$ and the threshold $\Theta$, the KLAD can treat models with weakly correlated rows as poisoned models and further eliminate them. Specifically, if the $\Theta \le \rho$, PDS removes the current detected local model; otherwise, it saves the current model and encrypts it to get ciphertext. Therefore, the KLAD algorithm is also effective in defense against data poisoning attacks and guarantees the security of local models.
\end{proof}
\subsection{Performance Analysis}
To illustrate the effectiveness of the HHDP system, we utilize the standard Fashion-MINIST and GTSRB datasets for performance demonstration, concluding the influence of data poisoning attacks and two defense algorithms. The experiment environment with an Intel(R) Core(TM), i7-2600 CPU @3.40GHz, and 16.00GB of RAM, RTX3090 of GPU, is used to serve our central computational server. The parameters in HHDP are set up as the VCT number $\mathcal{M}=50$. To show the performance of the HDDP system, we analyze the computational cost and give experimental simulations of the LMTV algorithm and KLAD algorithm based on the label-flipping attacks in data poisoning attacks.
\subsubsection{Computation Cost}
For the computation cost of the HDDP system, we firstly analyze the computation cost for local models training, encryption and decryption of local models, and global aggregation of benign models. Assume that we define $\epsilon$ as the time for training an image based on the machine learning network and the $R$ represents the number of rounds per local model training. Similarly, we define $\nu$ as a local model's encryption and decryption processes. Therefore, the time for each local model submitted can be indicated as follows.
\begin{equation}
T = \epsilon\times \frac{\mathcal{D}_i}{|\mathcal{D}_i|}\times R + \nu
\end{equation} 
Besides, the computational cost of the whole HHDP mainly focuses on LMTV and KLAD defense algorithms. As for the LMTV defense algorithm, the computational cost is directly related to the number of models with a linear relationship. Based on this algorithm, assume that we give the $\hat{t}$ to express the computational cost for the system to achieve the defense of a submitted local based on LMTV. According to those mentioned above, the computational cost required for $\mathcal{M}$ models can be calculated, such as $\hat{t}\times\mathcal{M}$. Similarly, we define the $\bar{t}$ to indicate the computational cost for the system to realize the defense of a submitted local model based on the KLAD algorithm. For the $\mathcal{M}$ submitted local models, the total computational cost can be computed by $\bar{t}\times\mathcal{M}$. Additionally, we also definite $\eta$ as the time required for unit model aggregation. Therefore, the entire computational cost can get as follows. If the PDS has the trusted test dataset $\mathcal{D}_{test}$, the system will adopt the LMTV algorithm to achieve the defense against data poisoning, which requires a total computational cost $\hat{T}_{total}$ can be obtained by the equation $\hat{T}_{total}=(\hat{t}+T)\times\mathcal{M}+\eta\times \mathcal{M'}$. Otherwise, the PDS does not have the $\mathcal{D}_{test}$, the total computational cost $\bar{T}_{total}$ can be get by the equation $\bar{T}_{total}=(\bar{t}+T)\times\mathcal{M} + \eta\times \mathcal{M'}$, where the $\mathcal{M'}$ means the number of benign local models after defending data poisoning attacks.
\subsubsection{Impact of Data Poisoning Attacks}
To show the damage of data poisoning attacks, we make $0$ in turn with the classification of $1\sim 9$ via label-flipping attacks based on the convolutional neural network (CNN) at different poisoning rates in FL, such as Fig. 3. From Fig. 3(a) to Fig. 3(e), we show the results of no label-flipping attacks, $10\%$ label-flipping attacks, $20\%$ label-flipping attacks, $30\%$ label-flipping attacks, and $40\%$ label-flipping attacks by using the Fashion-MINIST dataset. According to Fig. 3(a) $\sim$ Fig. 3(e), we know that the value of $Pr$ gradually increases as the label-flipping attacks deepen, which illustrates that the label-flipping attacks effectively cause purposeful misclassification. When the label-flipping attack reaches $40\%$, any label-flipping attacks between $0$ and $1 \sim 9$ can achieve at least $87.336\%$ of error classification. To be more intuitive about the changes in label-flipping attacks of different rates, we can also summarize that $Pr$ has changed to varying degrees for any label-flipping attacks based on different rates of label-flipping attacks, which also matches the degree of similarity between the characteristics of different classifications. To illustrate the success rate of data poisoning, we subtract the $Pr$ after different label-flipping attacks ratios from the original $Pr$ based on the label-flipping attack to generate Fig. 3(f) and Fig. 3(g). The Fig. 3(f) means the change between Fig. 3(a) and Fig. 3(b), which shows that misclassification did not increase by less than $1\%$  when the label-flipping attacks are $10\%$ and $20\%$ except the $0$ and $6$ categories of attacks also reaches about $6\%$. From Fig. 3(g), we know that when the label-flipping attacks rate reaches $30\%$, the difference of $Pr$ has been marked further. It must be noted that when the label-flipping attacks rate reaches $40\%$, the success rate of label-flipping attacks between $0$ and $1 \sim 9$ has increased significantly, and the lowest values have reached $85.657\%$. Besides, we also utilize the GTSRB dataset to show the impact of the label-flipping attacks in the AV area, such as Fig. 3(h). From Fig. 3(h), the AV may incorrectly identify the speed limit signs and people as the roads. To remove the impact of the data poisoning attacks, we give the simulation for LMTV and KLAD in FL as follows.
\begin{figure*}[ht]
\centering
\subfloat[No Clients with Data Poisoning Attacks.]{
	\begin{minipage}{4cm}
		\includegraphics[scale = 0.32]{./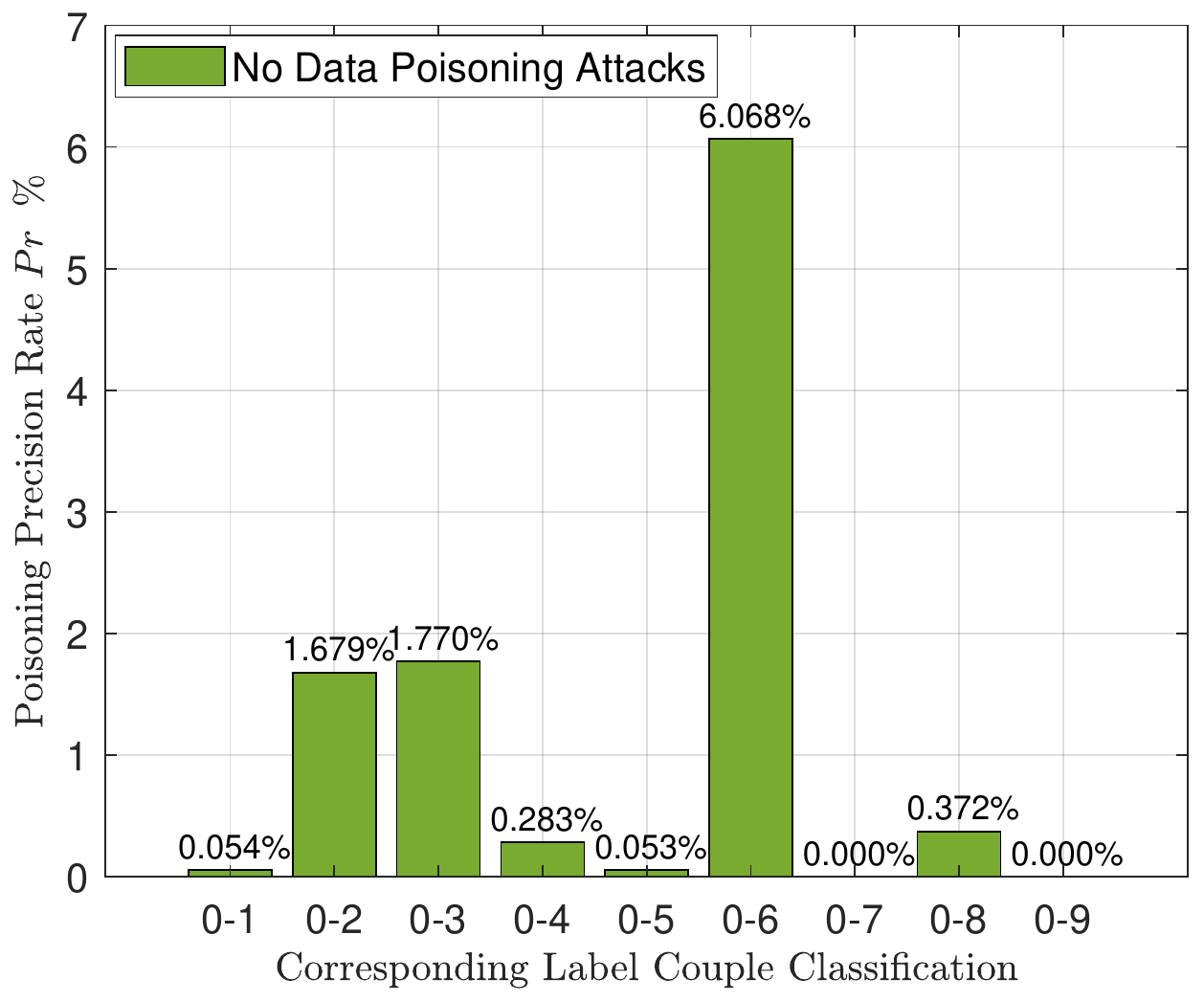}
	\end{minipage}}
\subfloat[10\% Clients with Data Poisoning Attacks.]{
	\begin{minipage}{4cm}
		\includegraphics[scale = 0.32]{./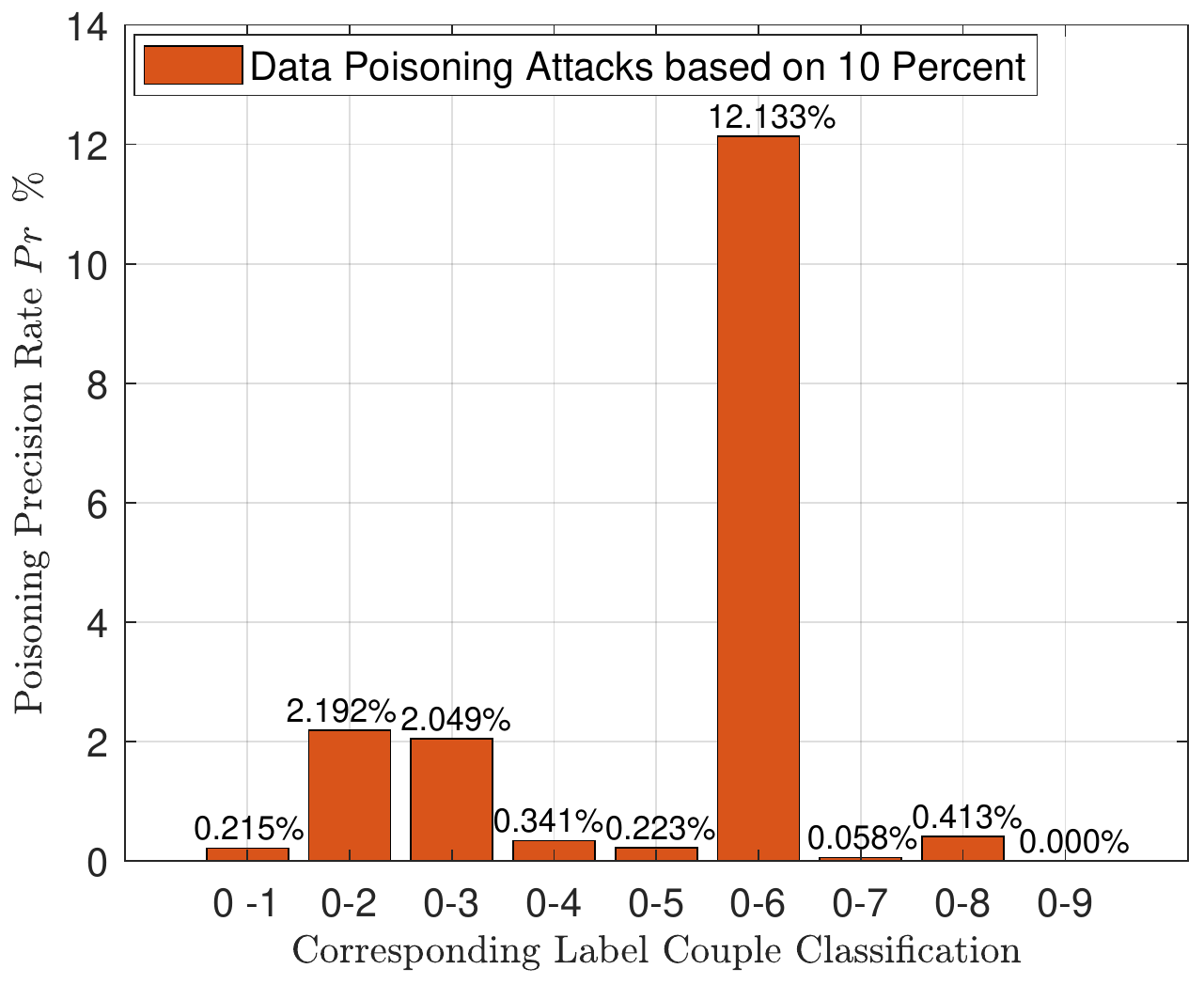}
	\end{minipage}}
\subfloat[20\% Clients with Data Poisoning Attacks.]{
	\begin{minipage}{4cm}
		\includegraphics[scale = 0.32]{./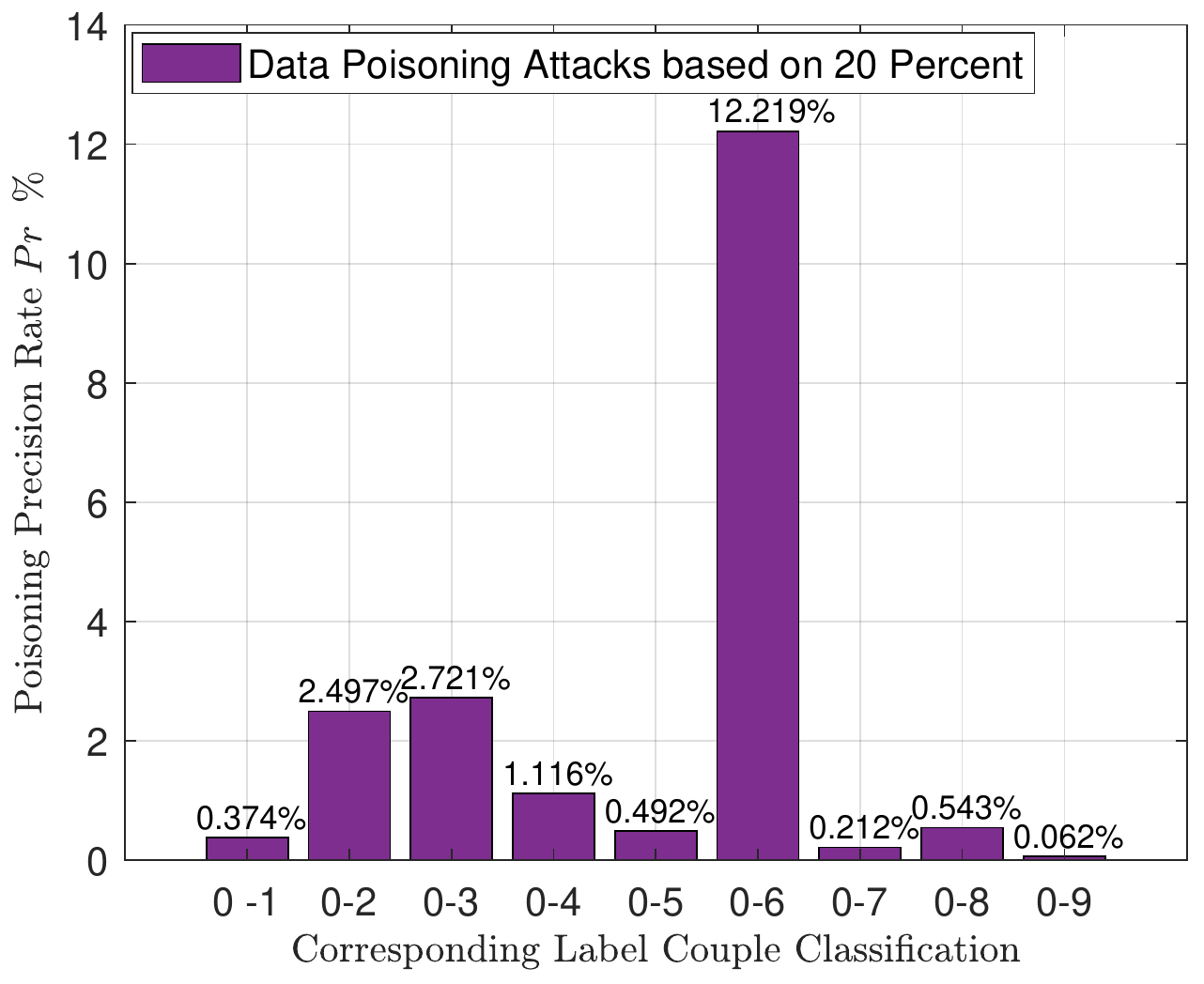}
	\end{minipage}}
\subfloat[30\% Clients with Data Poisoning Attacks.]{
	\begin{minipage}{4cm}
		\includegraphics[scale = 0.32]{./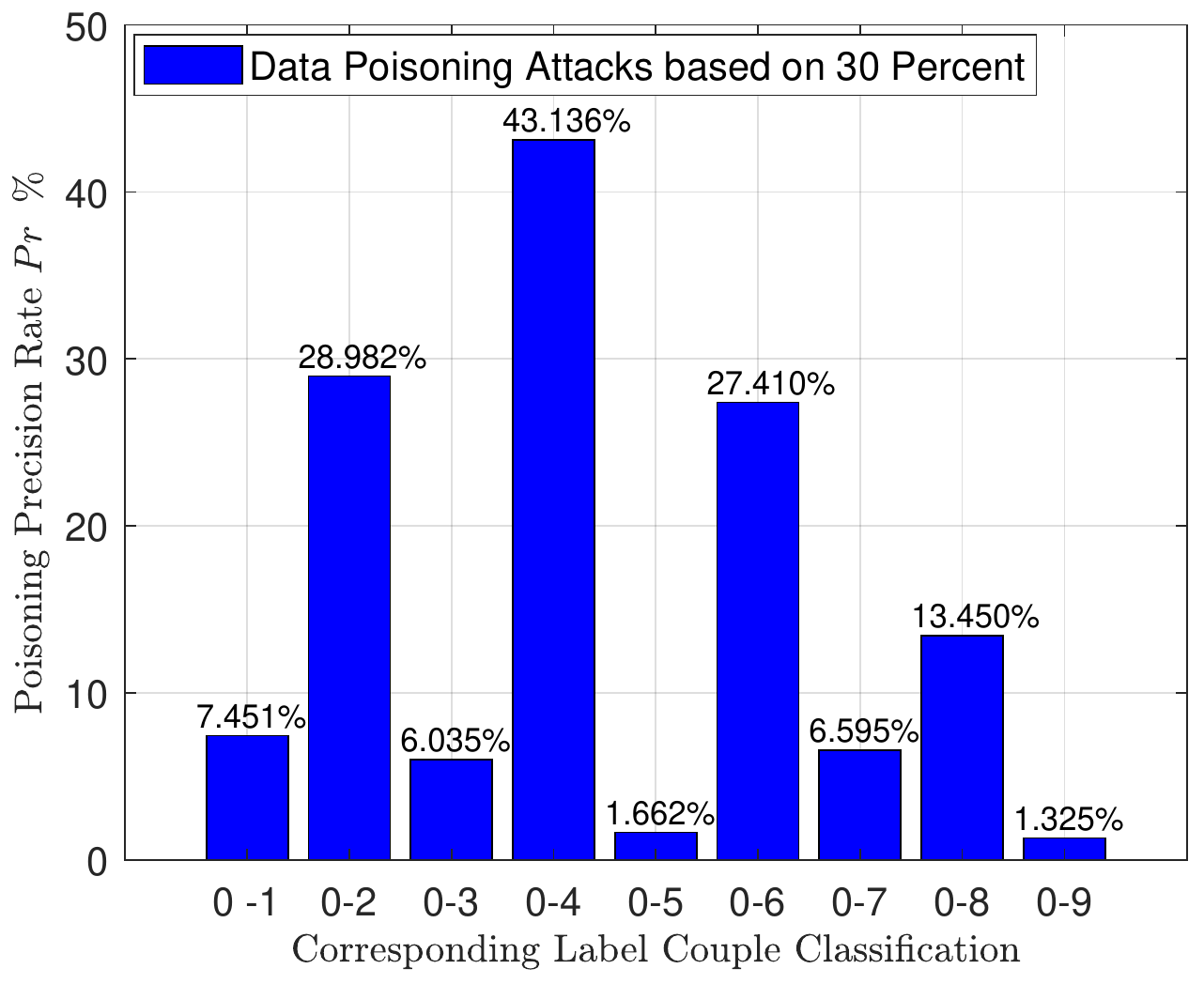}
	\end{minipage}}\\
\subfloat[40\% Clients with Data Poisoning Attacks.]{
	\begin{minipage}{4cm}
		\includegraphics[scale = 0.32]{./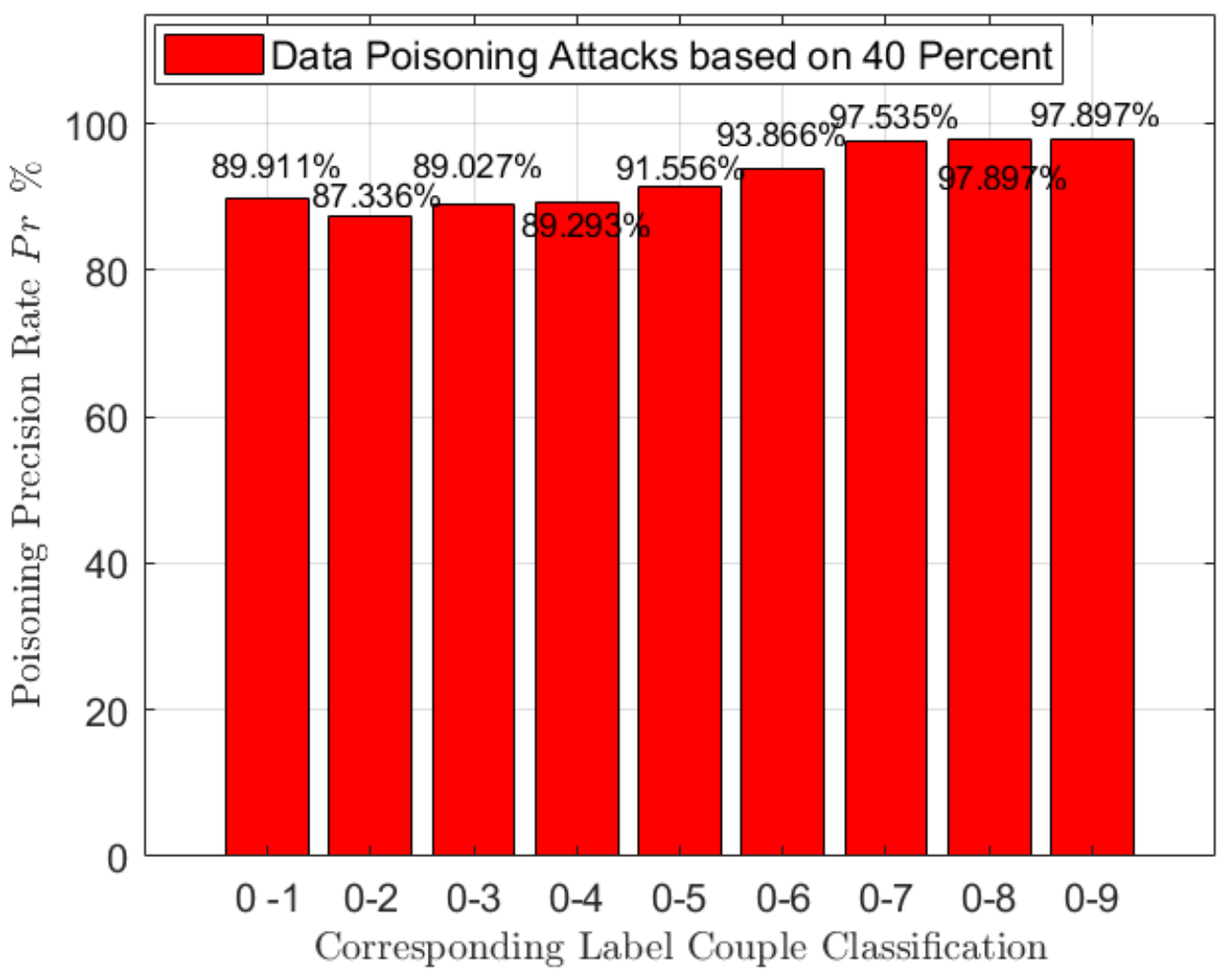}
	\end{minipage}}
\subfloat[Difference Value between (a) and (b), (c).]{
	\begin{minipage}{4cm}
		\includegraphics[scale = 0.32]{./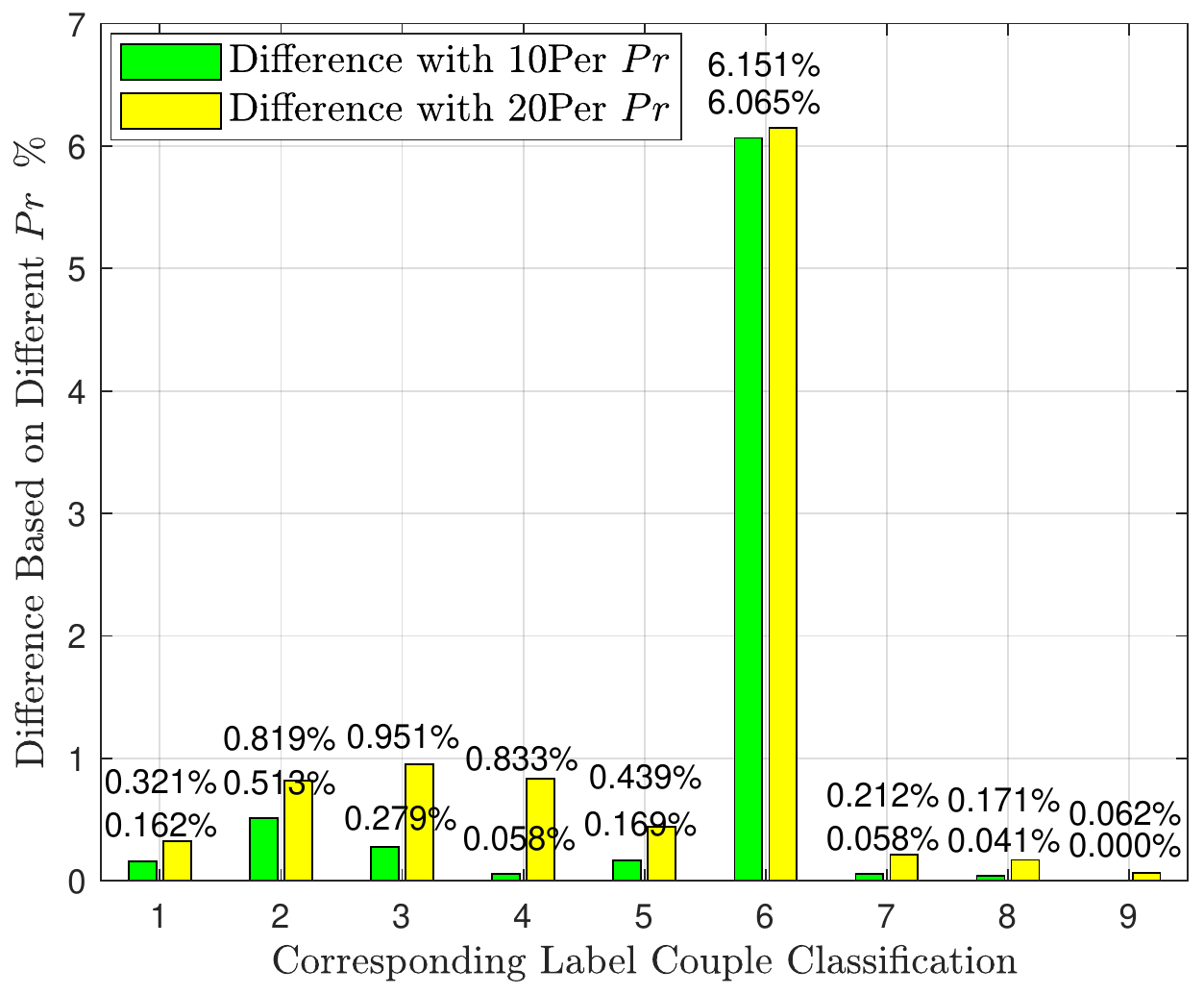}
	\end{minipage}}
\subfloat[Difference Value between (a) and (d), (e).]{
	\begin{minipage}{4cm}
		\includegraphics[scale = 0.32]{./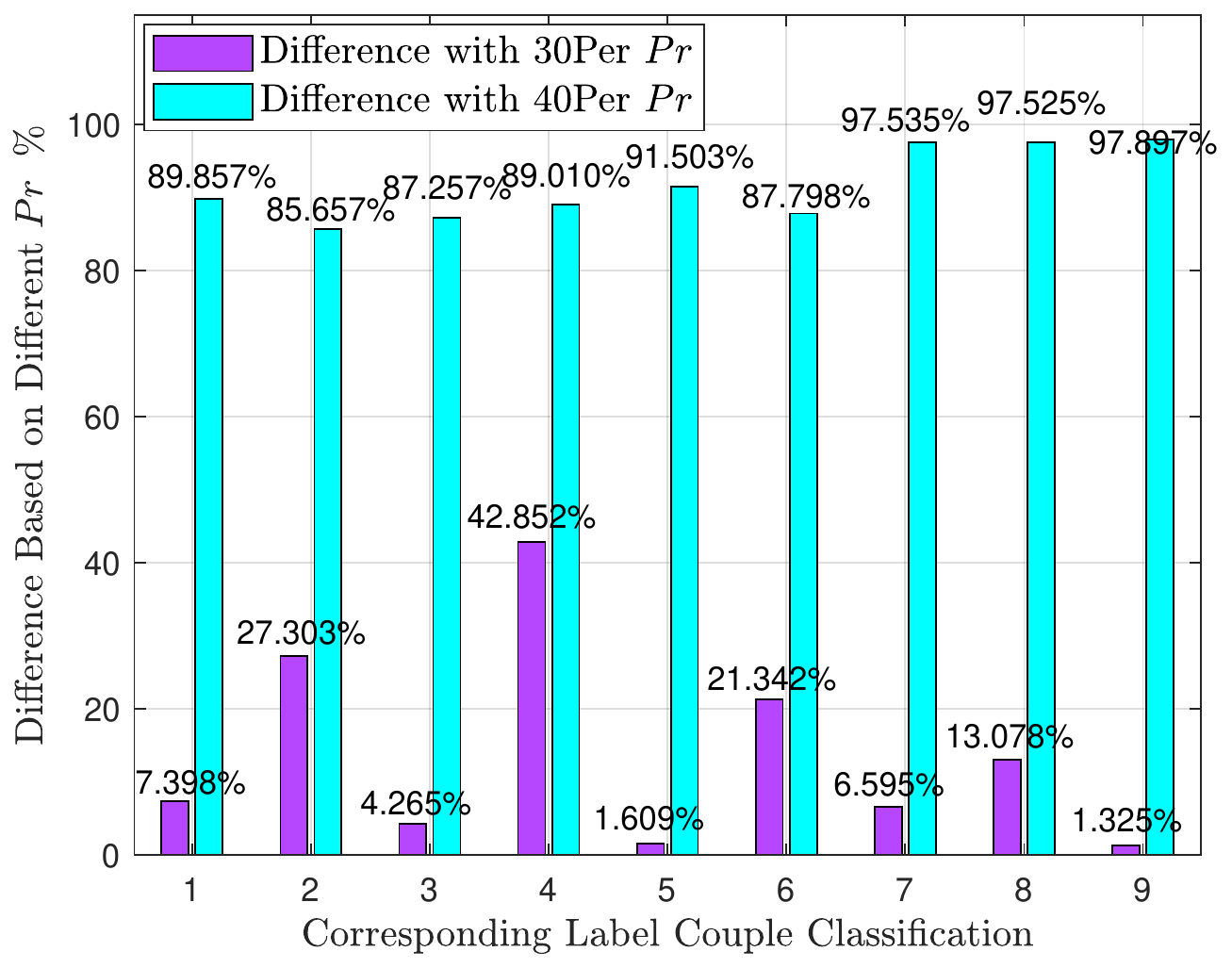}
	\end{minipage}}
\subfloat[Recognition Error among Classifications using GTSRB Dataset.]{
	\begin{minipage}{4cm}
		\includegraphics[width=4.1cm,height=3.33cm]{./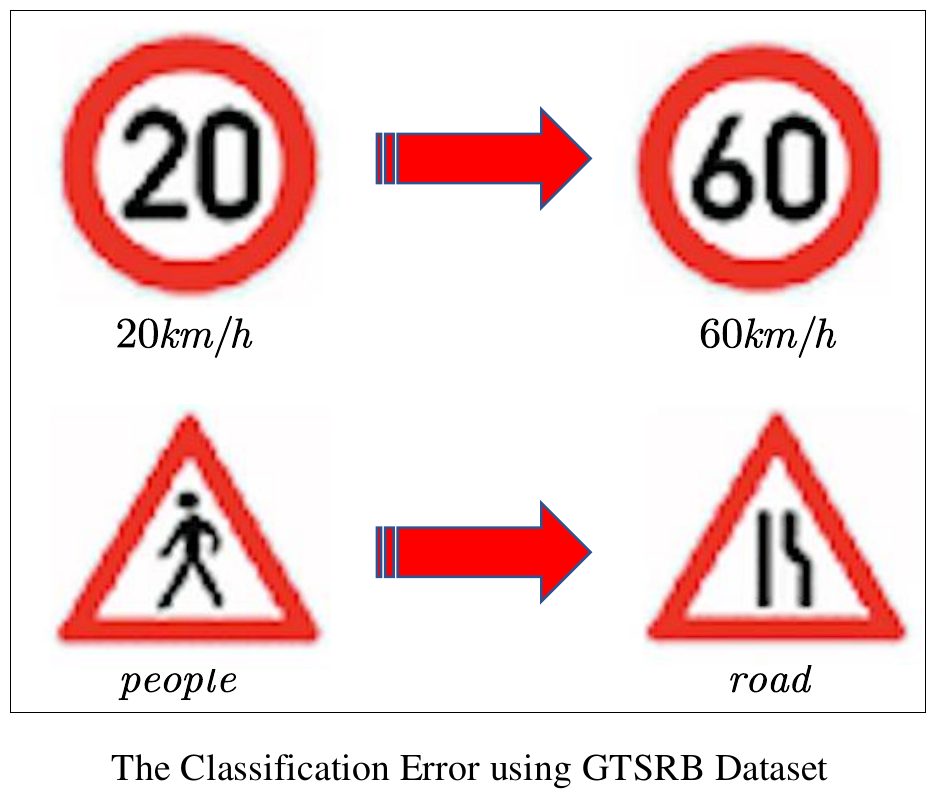}
	\end{minipage}}
\caption{Impact of Poisoning Attacks based on Different Rate by Flipping the Classification Labels between 0 and $1\sim 9$.}
\end{figure*}
\subsubsection{Function of LMTV}
To illustrate the function of LMTV, we utilize the LMTV algorithm to defend against the above label-flipping attacks based on Fashion-MINIST. The poisoning defense algorithm's key is divided into four different data poisoning attack rates $\gamma$, such as $\gamma\in\{10\%, 20\%, 30\%$, $40\%\}$, which can be computed as the number of poisoned local clients. According to the literature \cite{ref46}, we define the defense success rate $\mathcal{J}$  to measure the success of the defense algorithm based on the number of data poisoned local clients detected as $\mathcal{B}$, the total local clients $\mathcal{M}$. Besides, the number of data poisoning clients is $\mathcal{S}_i$ based on $\gamma$ by $\mathcal{S}_i=\gamma\times\mathcal{M}$. Therefore, the $\mathcal{J}$ can be expressed by the following.
\begin{equation}
\mathcal{J}=\frac{\mathcal{B}}{\mathcal{S}_i}=\frac{\mathcal{B}}{\gamma\times\mathcal{M}}\times100\%
\end{equation}

To improve the performance of the LMTV algorithm, we use the $\beta\;(\beta\in[0, 1])$ to control the detection threshold and adjust the sensitivity of data poisoning attacks detection. Note that the smaller value of $\beta$, the higher accuracy of the LMTV algorithm for detecting the poisoned local models. Of course, the smaller value of $\beta$ is not completely helpful for the defense data poisoned local models, and it also has a significant impact on benign local models. Therefore, we should choose an appropriate value of $\beta$ to achieve the defense against the poisoned model and not affect the benign local models. To determine the appropriate value of $\beta$, we give the defense ability for $40\%$ poisoned local models under different $\beta$ as shown in the following Fig. 4.
\begin{figure}[!htb]
	\centering 
	\includegraphics[width = 7cm]{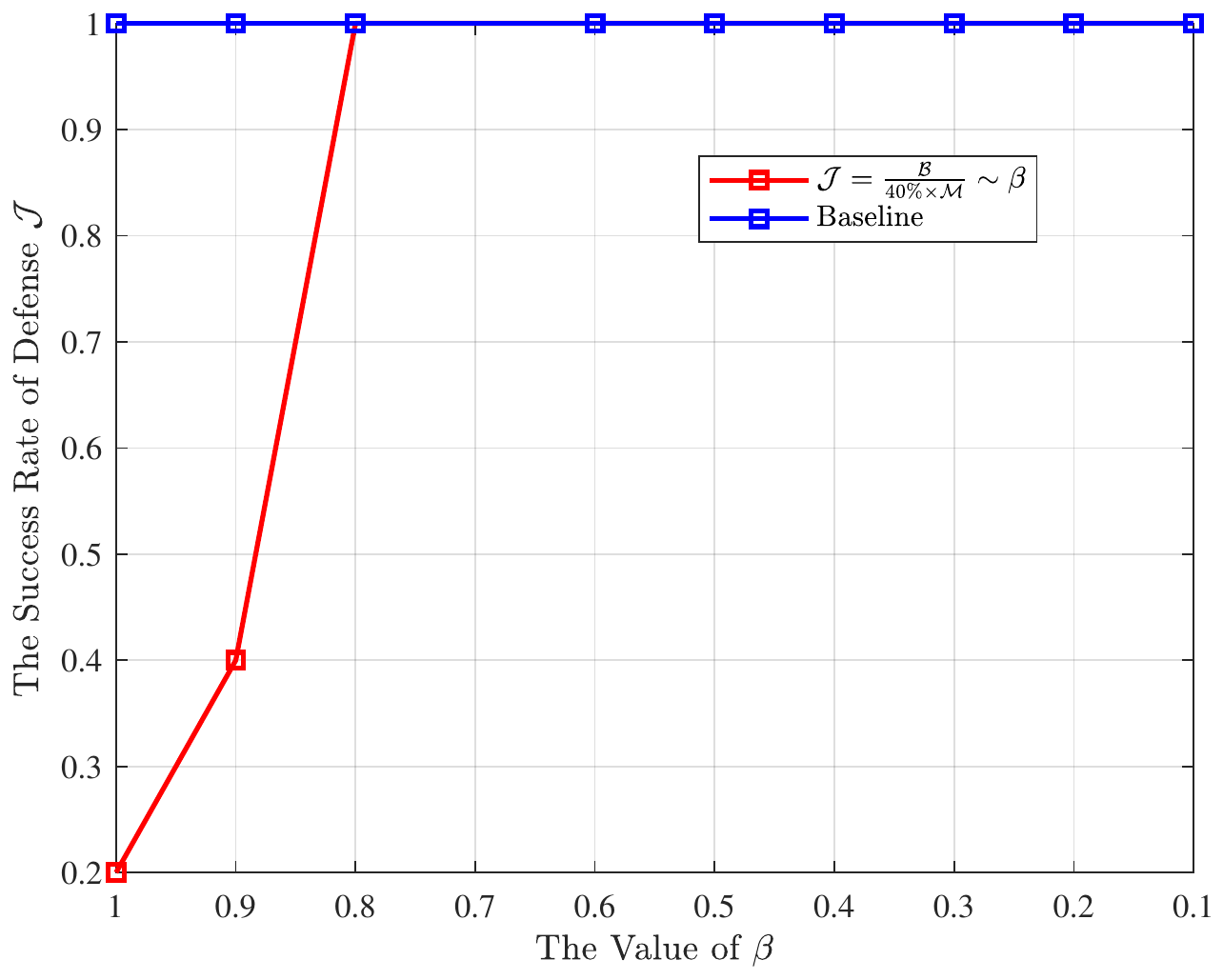}
	\caption{The Success Rate $\mathcal{J}$ with Different $\beta$.}
	\label{fig:figure1label}
\end{figure}
From Fig. 4, we can know the success rate of defense $\mathcal{J}$ has improved with the value of $\beta$ decrease. Additionally, when the $\beta = 0.8$, the $\mathcal{J}$ is constant to intersect with the blue baseline, which illustrates that the defense algorithm is optimal. The next comparative experiment is achieved based on $\beta = 0.8$. More importantly, the LMTV algorithm could perform under the four different data poisoning rates $\gamma$, such as Table I.
\begin{table}[!htb]
    \setlength{\tabcolsep}{1.6mm}{
	\begin{tabular}{lcccc}
		\hline
		Poisoning Rate $\gamma$ & $\gamma = 10\%$ & $\gamma = 20\%$ & $\gamma = 30\%$ & $\gamma = 40\%$ \\ \hline
		Poisoning VCs       & 5               & 10              & 15              & 20              \\ 
		Remove Malicious VCs     & 5               & 10              & 15              & 20              \\
		Remove Benign VCs		&0					&0					&0					&0				\\
		Defense Success Rate         & 100\%           & 100\%           & 100\%           & 100\%           \\ \hline
	\end{tabular}}
\caption{Performance of LMTV under $\gamma$ Data Poisoning Attacks}
\label{tab1}
\end{table}

To exhibit the performance of LMTV algorithm, we compare it with , Deep $k-$NN \cite{ref44}, One-Class SVM Defense \cite{ref58}, L2-Norm Outlier Defense \cite{ref59} based on the standard test dataset $\mathcal{D}_{test}$ from the Fashion-MINIST. The following table I compares the effectiveness of different defensive schemes in data poisoning attacks. Table II shows that four different defense strategies effectively detect data poisoning attacks. However, our LMTV algorithm is better than the other three methods.
\begin{table}[!htb]
    \setlength{\tabcolsep}{1.6mm}{
	\begin{tabular}{@{}lcc@{}}
		\toprule
		Defense Strategy    & Remove Poisoning Models & Defense Success Rate \\ \midrule
		\textbf{LMTV}       & \textbf{20/20}          & \textbf{100\%}       \\
		Deep $k-$NN ($k=5$) & 19/20                   & 95\%                 \\
		L2-Norm Outlierts   & 11/20                   & 55\%                 \\
		One-class SVM       & 07/20                   & 35\%                 \\ \bottomrule
	\end{tabular}}
\caption{Different Defense Strategies Compare under 40\% Data Poisoning Attacks}
\label{tab2}
\end{table}
\subsubsection{Function of KLAD} The LMTV algorithm cannot realize the defense strategy if the PDS does not obtain the trust test dataset. Here, we should rely on the KLAD algorithm to detect poisoned local models. To verify the performance of KLAD, we also adopt the Fashion-MINIST dataset based on the data poisoning rates $\gamma$ to simulate the experiment. Firstly, we give the threshold $\Theta$ to adjust the sensitivity of KLAD based on the $\mathcal{J}$, which means that a good $\Theta$ could improve the defensive effect of the KLAD defense algorithm. Fig. 5 shows the different performances based on different $\Theta$ under the data poisoning rate $\gamma = 40\%$. From Fig. 5, we can see that when the $\Theta < 1$, the value of $\mathcal{J}$ is over $1$, which means the benign local models are also regarded as the poisoned local models. To avoid excessive non-poisoning local models being deleted, we set the threshold $\Theta$ to $1$ to execute the following experiments.
\begin{figure}[!htb]
	\centering 
	\includegraphics[width = 7cm]{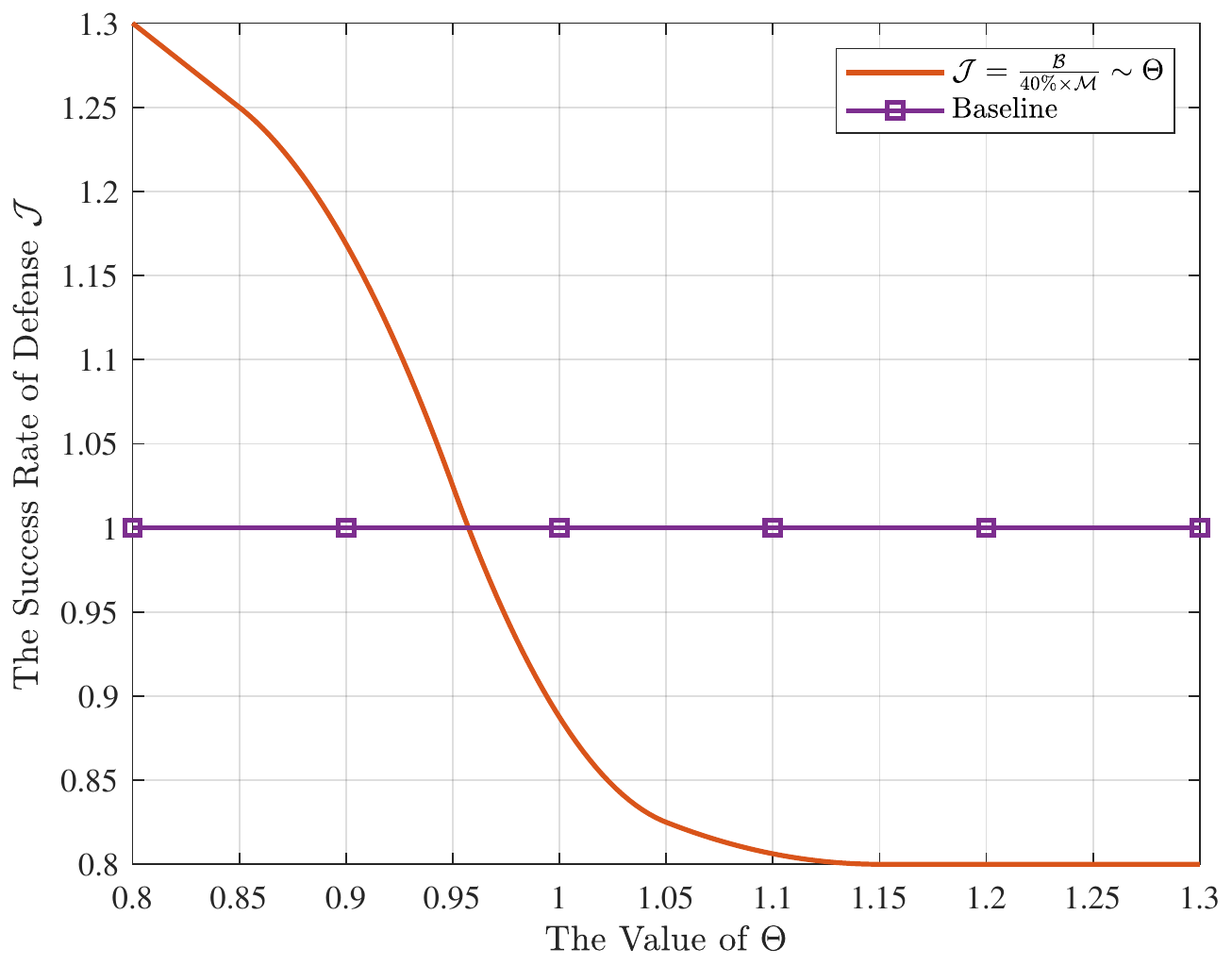}
	\caption{The Success Rate $\mathcal{J}$ with Different $\Theta$.}
	\label{fig:figure2label}
\end{figure}
To show the performance of KLAD, we give the success rate of defense based on different $\gamma$, such as in Table III. We know that the KLAD algorithm could effectively achieve defense performed better than existing methods without the test dataset in Table III. Besides, the success rate of defense is also different under different data poisoning ratios. Therefore, the higher the poisoning rate, the more conducive to the defense's success.
\begin{table}[!htb]
    \setlength{\tabcolsep}{1.6mm}{
	\begin{tabular}{lcccc}
		\hline
		Poisoning Rate $\gamma$ & $\gamma = 10\%$ & $\gamma = 20\%$ & $\gamma = 30\%$ & $\gamma = 40\%$ \\ \hline
		Poisoning VCs       & 5               & 10              & 15              & 20              \\ 
		Remove Malicious VCs     & 2               & 6              & 12              & 17              \\
		Remove Benign VCs		&10					&4					&2					&1				\\
		Defense Success Rate         & 40\%           & 60\%           & 80\%           & 85\%           \\ \hline
	\end{tabular}}
\caption{Performance of KLAD under $\gamma$ Data Poisoning Attacks}
\label{tab3}
\end{table}

Therefore, all the local models submitted are detected to send the benign local models to MAS and generate the global model in MAS. Finally, the benign global model is sent to each benign VCs.
\section{Conclusion}
In this paper, we mainly achieve the defense against label-flipping attacks in FL. Specifically, we construct a novel framework to add a component of data poisoning defense to the traditional FL architecture. In particular, we adopt Paillier cryptography to ensure the security of transmission of local model parameters. More importantly, regarding whether there is a test dataset for the submitted local models, we design  LMTV and KLAD defense algorithms to improve FL devices' safety. Finally, we demonstrated the effectiveness of the LMTV and KLAD through some experiments evaluation on the real-world datasets. In the future, we will continue researching the data poisoning defense algorithms to optimize the model's structure and improve defense performance without the trusted test dataset.


\begin{IEEEbiography}[{\includegraphics[width=1in,height=1.25in,clip,keepaspectratio]{./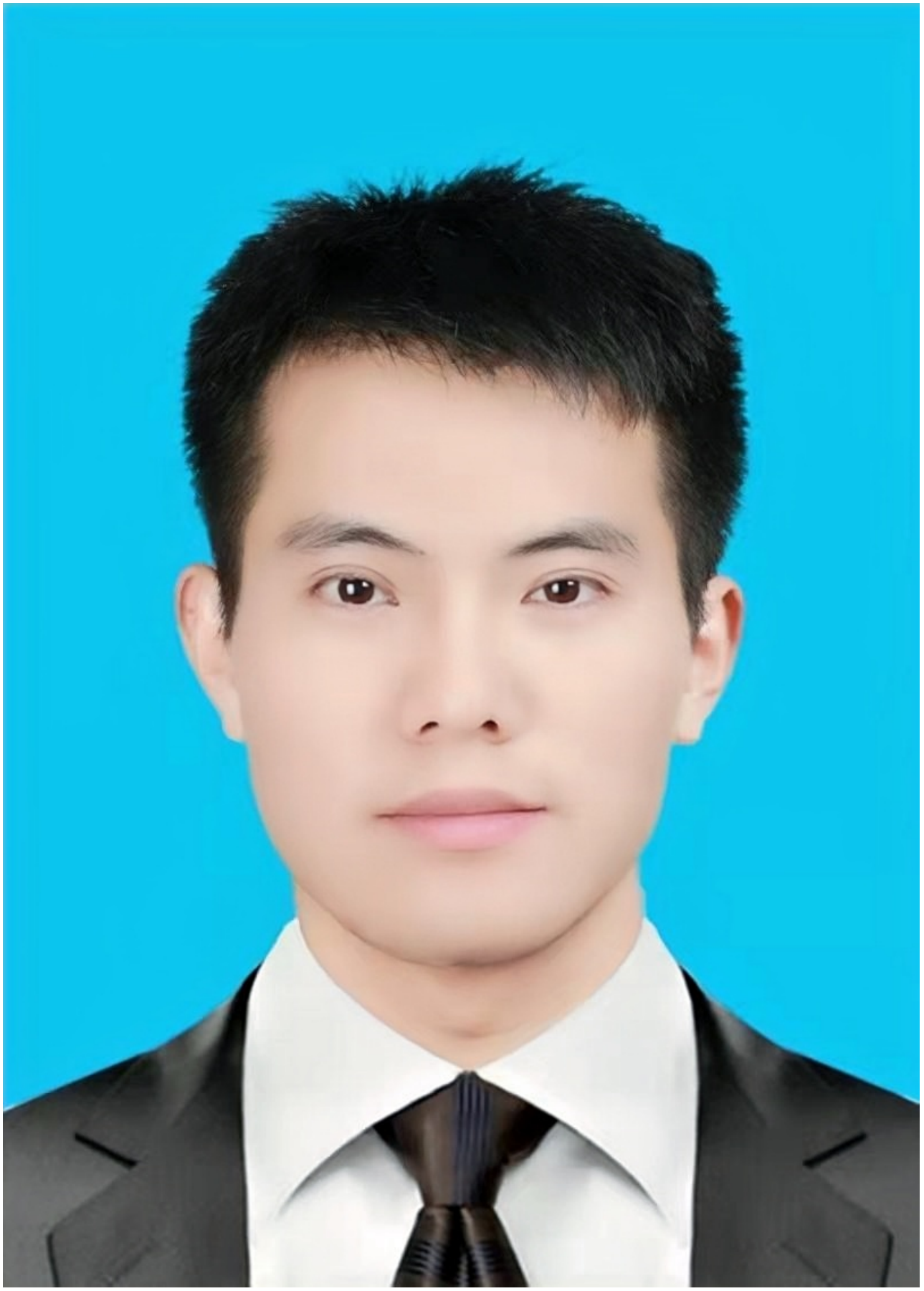}}]
	{Jiayin Li} received the B.Sc. degree in Electronic Information Engineering from Liaocheng University, Liaocheng, China, in 2015, and the MS degree in Communication Engineering, and the Ph.D. degree in Computer Science from Fuzhou University, Fuzhou, China, in 2018, 2022. Now, he is a teacher in the College of Computer and Cyber Security, Fujian Normal University. His research interests are in the area of data security, smart city, and machine learning. He has published some papers on intelligent information processing and intelligent transportation system security, including papers in IEEE TVT, IEEE Globecom, Journal on Communications, MDPI Sensors, IEEE WCSP.
\end{IEEEbiography}
\vspace{-33pt}
\begin{IEEEbiography}[{\includegraphics[width=1in,height=1.25in,clip,keepaspectratio]{./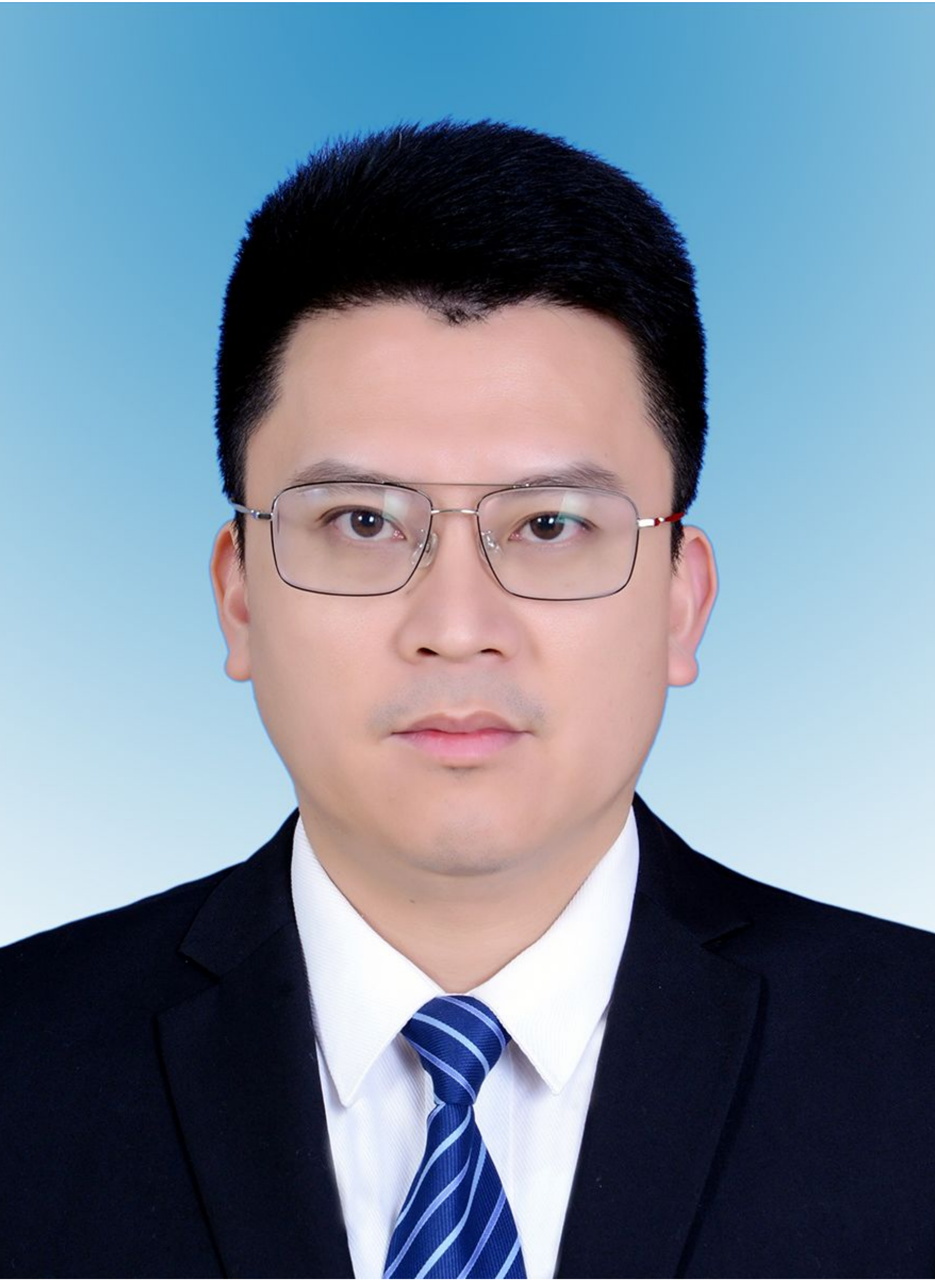}}]
{Wenzhong Guo} received the B.S.c and MS degrees in Computer Science, and the Ph.D. degree in Communication and Information System from Fuzhou University, Fuzhou, China, in 2000, 2003, and 2010, respectively. He is currently a full professor and dean in College of Mathematics and Computer Science at Fuzhou University. His research interests include intelligent information processing, sensor networks, network computing and network performance evaluation.
\end{IEEEbiography}
\vspace{-33pt}
\begin{IEEEbiography}[{\includegraphics[width=1in,height=1.25in,clip,keepaspectratio]{./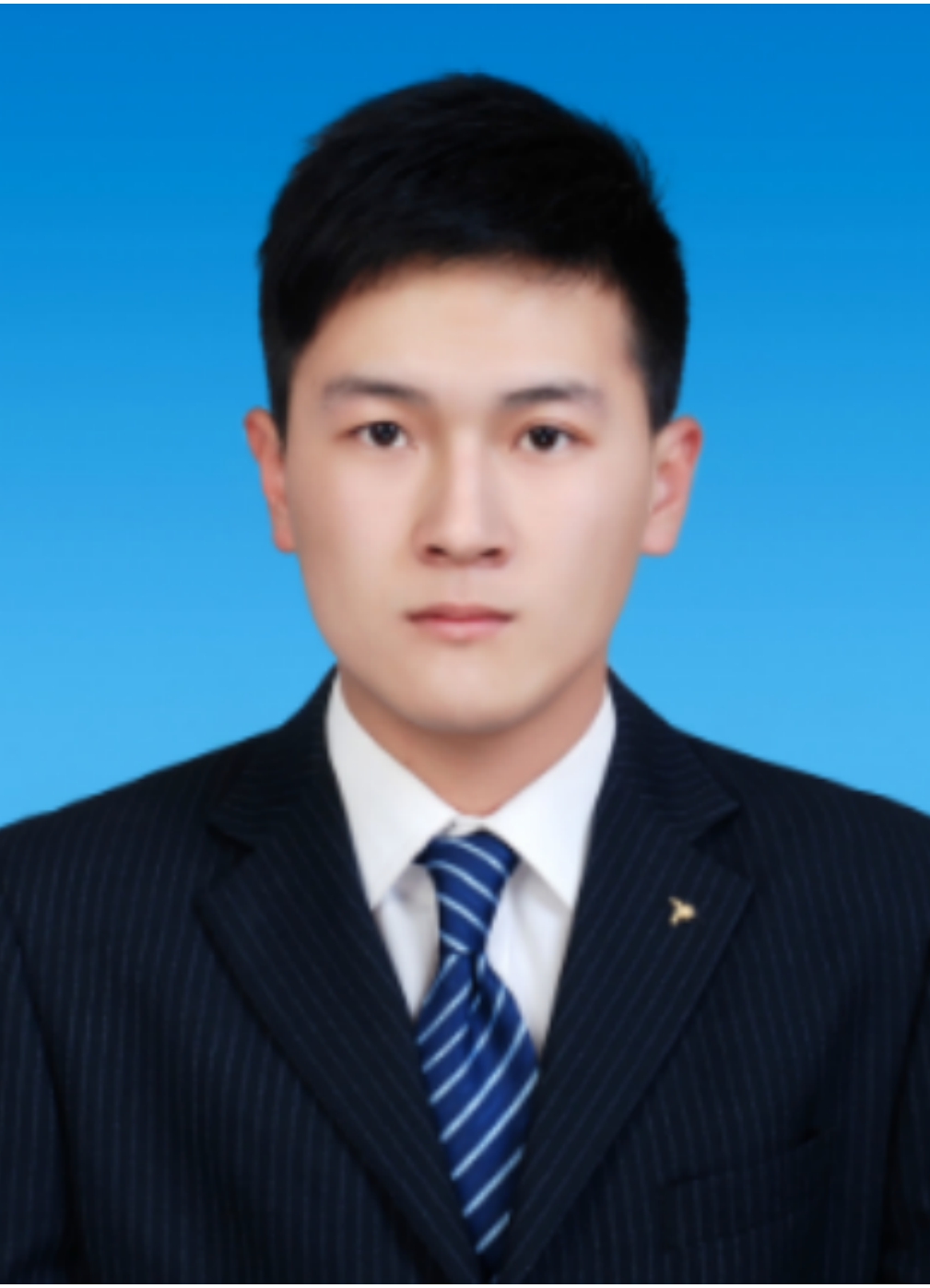}}]
{Xingshuo Han} received the B.Sc. degree in Communication Engineering from Liaocheng University, China, in 2014, and the MS degree in Circuit and System from Fuzhou University, Fuzhou, China, in 2018. Now he is PhD student in School of Computer Science and Engineering, Nanyang Technological University, Singapore. His research interests are in the area of machine learning, image processing and autonomous vehicles security.
\end{IEEEbiography}
\vspace{-33pt}
\begin{IEEEbiography}[{\includegraphics[width=1in,height=1.25in,clip,keepaspectratio]{./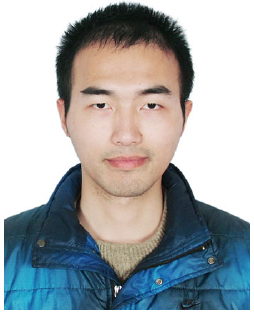}}]
{Jianping Cai} received the Master degree from Fuzhou University, Fuzhou, China, in 2016. He is pursuing his Ph.D. degree in College of Mathematics and Computer Science/College of Software, Fuzhou University, Fuzhou, China. His research interests include federal learning, Internet of things technology, machine learning, differential privacy and optimization theory.
\end{IEEEbiography}
\vspace{-33pt}
\begin{IEEEbiography}[{\includegraphics[width=1in,height=1.25in,clip,keepaspectratio]{./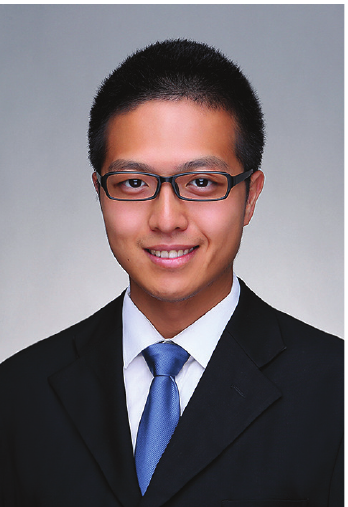}}]
{Ximeng Liu} (S'13-M'16) received the B.Sc. degree in electronic engineering from Xidian University, Xi'an, China, in 2010 and the Ph.D. degree in Cryptography from Xidian University, China, in 2015. Now he is the full professor in the College of Mathematics and Computer Science, Fuzhou University. Also, he was a research fellow at the School of Information System, Singapore Management University, Singapore. He has published more than 200 papers on the topics of cloud security and big data security including papers in IEEE TOC, IEEE TII, IEEE TDSC, IEEE TSC, IEEE IoT Journal, and so on. He awards ``Minjiang Scholars'' Distinguished Professor, ``Qishan Scholars'' in Fuzhou University, and ACM SIGSAC China Rising Star Award (2018). His research interests include cloud security, applied cryptography and big data security. He is a member of the IEEE, ACM, CCF.
\end{IEEEbiography}
 
\vfill

\end{document}